\documentclass[10pt]{article}
\usepackage{amsmath}
\usepackage{amssymb}
\usepackage{color}
\usepackage{float}
\usepackage{graphicx}
\usepackage{hyperref}
\usepackage{multicol}
\usepackage{setspace}
\usepackage{textcomp}
\usepackage{verbatim}
\usepackage{authblk}

{\makeatletter
 \gdef\xxxmark{%
   \expandafter\ifx\csname @mpargs\endcsname\relax 
     \expandafter\ifx\csname @captype\endcsname\relax 
       \marginpar{xxx}
     \else
       xxx 
     \fi
   \else
     xxx 
   \fi}
 \gdef\xxx{\@ifnextchar[\xxx@lab\xxx@nolab}
 \long\gdef\xxx@lab[#1]#2{{\bf [\xxxmark #2 ---{\sc #1}]}}
 \long\gdef\xxx@nolab#1{{\bf [\xxxmark #1]}}
 \gdef\turnoffxxx{\long\gdef\xxx@lab[##1]##2{}\long\gdef\xxx@nolab##1{}}%
}

\addtolength{\textwidth}{1.5in}
\addtolength{\evensidemargin}{-1.0in}
\addtolength{\oddsidemargin}{-0.75in}
\addtolength{\textheight}{1.25in}
\addtolength{\topmargin}{-0.75in}

\newtheorem{theorem}{Theorem}
\newtheorem{lemma}{Lemma}

\newcommand{\sq}{\hbox{\rlap{$\sqcap$}$\sqcup$}}
\newcommand{\qed}{\hspace*{\fill}\sq}
\newenvironment{proof}{\noindent\textbf{Proof.}\ }{\qed\par\vskip 4mm\par}

\floatstyle{ruled}
\newfloat{algo}{bhpt}{lop}
\floatname{algo}{Algorithm}

\title{A Super-Fast Distributed Algorithm for Bipartite Metric Facility Location}
\date{}

\author{James Hegeman}
\author{Sriram V.~Pemmaraju
\thanks{This work is supported in part by National Science Foundation grant CCF
0915543. This is a full version of a paper that appeared in DISC 2013. It contains proofs that had
to be removed from the DISC 2013 paper due to space constraints.}} 
\affil{Department of Computer Science\\
The University of Iowa\\
Iowa City, Iowa 52242-1419, USA\\
\texttt{[james-hegeman,sriram-pemmaraju]@uiowa.edu}
}

\begin{document}

\maketitle

\begin{abstract}
The \textit{facility location} problem consists of a set of \textit{facilities}
$\mathcal{F}$, a set of \textit{clients} $\mathcal{C}$, an \textit{opening cost}
$f_i$ associated with each facility $x_i$, and a \textit{connection cost}
$D(x_i,y_j)$ between each facility $x_i$ and client $y_j$. The goal is to find a
subset of facilities to \textit{open}, and to connect each client to an open
facility, so as to minimize the total facility opening costs plus connection
costs. This paper presents the first expected-sub-logarithmic-round distributed
$O(1)$-approximation algorithm in the $\mathcal{CONGEST}$ model for the
\textit{metric} facility location problem on the complete bipartite network with
parts $\mathcal{F}$ and $\mathcal{C}$. Our algorithm has an expected running
time of $O((\log \log n)^3)$ rounds, where $n = |\mathcal{F}| + |\mathcal{C}|$.
This result can be viewed as a continuation of our recent work (ICALP 2012) in
which we presented the first sub-logarithmic-round distributed
$O(1)$-approximation algorithm for metric facility location on a \textit{clique}
network. The bipartite setting presents several new challenges not present in
the problem on a clique network. We present two new techniques to overcome these
challenges. (i) In order to deal with the problem of not being able to choose
appropriate probabilities (due to lack of adequate knowledge), we design an
algorithm that performs a random walk over a probability space and analyze the
progress our algorithm makes as the random walk proceeds. (ii) In order to deal
with a problem of quickly disseminating a collection of messages, possibly
containing many duplicates, over the bipartite network, we design a
probabilistic hashing scheme that delivers all of the messages in
expected-$O(\log \log n)$ rounds.
\end{abstract}

\section{Introduction}
\label{section:Introduction}

This paper continues the recently-initiated exploration
\cite{BHP12,BHP12arxiv,LenzenArxiv2012,LPPP05,PattShamirTeplitsky11} of the design of
sub-logarithmic, or ``super-fast'' distributed algorithms in low-diameter,
bandwidth-constrained settings. To understand the main themes of this
exploration, suppose that we want to design a distributed algorithm for a
problem on a low-diameter network (we have in mind a clique network or a
diameter-$2$ network). In one sense, this is a trivial task since the entire
input could be shipped off to a single node in a single round and that node can simply
solve the problem locally. On the other hand, the problem could be quite
challenging if we were to impose reasonable constraints on bandwidth that
prevent the fast delivery of the entire input to a small number of nodes. A
natural example of this phenomenon is provided by the
\textit{minimum spanning tree} (MST) problem. Consider a clique network in which
each edge $(u,v)$ has an associated weight $w(u,v)$ of which only the nodes $u$
and $v$ are aware. The problem is for the nodes to compute an MST of the
edge-weighted clique such that after the computation, each node knows all MST
edges. It is important to note that the problem is defined by $\Theta(n^2)$
pieces of input and it would take $\Omega\left(\frac{n}{B}\right)$ rounds of
communication for all of this information to reach a single node (where $B$ is
the number of bits that can travel across an edge in each round). Typically,
$B = O(\log n)$, and this approach is clearly too slow given our goal of
completing the computation in a sub-logarithmic number of rounds. Lotker et
al.~\cite{LPPP05} showed that the MST problem on a clique can in fact be solved
in $O(\log \log n)$ rounds in the $\mathcal{CONGEST}$ model of distributed
computation, which is a synchronous, message-passing model in which each node
can send a message of size $O(\log n)$ bits to each neighbor in each round. The
algorithm of Lotker et al.~employs a clever merging procedure that, roughly
speaking, causes the sizes of the MST components to square with each iteration,
leading to an $O(\log \log n)$-round computation time. The overall challenge in
this area is to establish the round complexity of a variety of problems that
make sense in low-diameter settings. The area is largely open with few upper
bounds and no non-trivial lower bounds known. For example, it has been proved
that computing an MST requires
$\Omega\left((\frac{n}{\log n})^{1/4}\right)$ rounds in the
$\mathcal{CONGEST}$ model for diameter-$3$ graphs \cite{LotkerPP06}, but no
lower bounds are known for diameter-$2$ or diameter-$1$ (clique) networks.

The focus of this paper is the \textit{distributed facility location} problem,
which has been considered by a number of researchers
\cite{MoscibrodaWattenhofer05,GLS06,PanditPemmaraju09,PanditPemmaraju10,BHP12}
in low-diameter settings. We first describe the sequential version of the
problem. The input to the facility location problem consists of a set of
\textit{facilities} $\mathcal{F} = \{x_1, x_2, \ldots, x_{n_f}\}$, a set of
\textit{clients} $\mathcal{C} = \{y_1, y_2, \ldots, y_{n_c}\}$, a (nonnegative)
\textit{opening cost} $f_i$ associated with each facility $x_i$, and a
(nonnegative) \textit{connection cost} $D(x_i,y_j)$ between each facility $x_i$
and client $y_j$. The goal is to find a subset $F \subseteq \mathcal{F}$ of
facilities to \textit{open} so as to minimize the total facility opening costs
plus connection costs, i.e.
$FacLoc(F) := \sum_{x_i \in F} f_i + \sum_{y_j \in \mathcal{C}} D(F,y_j)$, where
$D(F,y_j) := \min_{x_i \in F} D(x_i,y_j)$. Facility location is an old and
well-studied problem in operations research
\cite{Balinski66,CNWBook,EHK77,HamburgerKuehn63,StollSteimer63} that arises in
contexts such as locating hospitals in a city or locating distribution centers
in a region. The \textit{metric facility location} problem is an important
special case of facility location in which the connection costs satisfy the
following ``triangle inequality:'' for any $x_i, x_{i'} \in \mathcal{F}$ and
$y_j, y_{j'} \in \mathcal{C}$,
$D(x_i,y_j) + D(y_j,x_{i'}) + D(x_{i'},y_{j'}) \geq D(x_i,y_{j'})$. The facility
location problem, even in its metric version, is NP-complete and finding
approximation algorithms for the problem has been a fertile area of research.
There are several constant-factor approximation algorithms for metric facility
location (see \cite{Li11} for a recent example). This approximation factor is
known to be near-optimal \cite{GuhaKhuller98}.

More recently, the facility location problem has also been used as an
abstraction for the problem of locating resources in wireless networks
\cite{FrankBook,PanditPemmarajuICDCN09}. Motivated by this application, several
researchers have considered the facility location problem in a distributed
setting. In \cite{MoscibrodaWattenhofer05,PanditPemmaraju09,PanditPemmaraju10},
as well as in the present work, the underlying communication network is a
complete bipartite graph $G = \mathcal{F} + \mathcal{C}$, with $\mathcal{F}$ and
$\mathcal{C}$ forming the bipartition. At the beginning of the algorithm, each
node, whether a facility or client, has knowledge of the connection costs
(``distances'') between itself and all nodes in the other part. In addition, the
facilities know their opening costs. The problem is to design a distributed
algorithm that runs on $G$ in the $\mathcal{CONGEST}$ model and produces a
subset $F \subseteq \mathcal{F}$ of facilities to \textit{open}. To simplify
exposition we assume that every cost in the problem input can be represented in
$O(\log n)$ bits, thus allowing each cost to be transmitted in a single message.
Each chosen facility will then open and provide services to any and all clients
that wish to connect to it (each client must be served by some facility). The
objective is to guarantee that $FacLoc(F) \leq \alpha \cdot OPT$, where $OPT$ is
the cost of an optimal solution to the given instance of facility location and
$\alpha$ is a constant. We call this the \textsc{BipartiteFacLoc} problem. In
this paper we present the first sub-logarithmic-round algorithm for the
\textsc{BipartiteFacLoc} problem; specifically, our algorithm runs in
$O((\log \log n_f)^2 \cdot \log \log \min \{n_f, n_c\})$ rounds in expectation,
where $n_f = |\mathcal{F}|$ and $n_c = |\mathcal{C}|$. All previous distributed
approximation algorithms for \textsc{BipartiteFacLoc} require a logarithmic
number of rounds to achieve near-optimal approximation factors.

\subsection{Overview of Technical Contributions}

In a recent paper (ICALP 2012, \cite{BHP12}; full version available as \cite{BHP12arxiv}), we presented an
expected-$O(\log \log n)$-round algorithm in the $\mathcal{CONGEST}$ model for
\textsc{CliqueFacLoc}, the ``clique version'' of \textsc{BipartiteFacLoc}. The
underlying communication network for this version of the problem is a clique
with each edge $(u,v)$ having an associated (connection) cost $c(u,v)$ of which
only nodes $u$ and $v$ are aware (initially). Each node $u$ also has an opening
cost $f_u$, and may choose to open as a facility; nodes that do not open must
connect to an open facility. The cost of the solution is defined as before -- as
the sum of the facility opening costs and the costs of established connections.
Under the assumption that the connection costs form a metric, our algorithm for
\textsc{CliqueFacLoc} yields an $O(1)$-approximation. We had hoped that a
``super-fast'' algorithm for \textsc{BipartiteFacLoc} would be obtained in a
straightforward manner by extending our \textsc{CliqueFacLoc} algorithm.
However, it turns out that moving from a clique communication network to a
complete bipartite communication network raises several new and significant
challenges related to information dissemination and a lack of adequate
knowledge. Below we outline these challenges and our solutions to them.

\textbf{Overview of solution to \textsc{CliqueFacLoc}.} To solve
\textsc{CliqueFacLoc} on an edge-weighted clique $G$ \cite{BHP12,BHP12arxiv} we reduce it
to the problem of computing a $2$-ruling set in an appropriately-defined
spanning subgraph of $G$. A \textit{$\beta$-ruling set} of a graph is an
independent set $S$ such that every node in the graph is at most $\beta$ hops
away from some node in $S$; a \textit{maximal independent set} (MIS) is simply a
$1$-ruling set. The spanning subgraph $H$ on which we compute a $2$-ruling set
is induced by clique edges whose costs are no greater than a pre-computed
quantity which depends on the two endpoints of the edge in question.

We solve the $2$-ruling set problem on the spanning subgraph $H$ via a
combination of deterministic and randomized sparsification. Briefly, each node
selects itself with a uniform probability $p$ chosen such that the subgraph $H'$
of $H$ induced by the selected nodes has $\Theta(n)$ edges in expectation. The
probability $p$ is a function of $n$ and the number of edges in $H$. We next
deliver all of $H'$ to every node.
It can be shown that a
graph with $O(n)$ edges can be completely delivered to every node in $O(1)$
rounds on a clique and since $H'$ has $O(n)$ edges in expectation, the delivery
of $H'$ takes expected-$O(1)$ rounds. 
Once $H'$ has been disseminated in this manner, each node uses the same
(deterministic) rule to locally compute an MIS of $H'$. 
Following the computation of an MIS of
$H'$, nodes in the MIS and nodes in their 2-neighborhood are all deleted from
$H$ and $H$ shrinks in size. Since $H$ is now smaller, a larger probability $p$ can be used
for the next iteration. This increasing sequence of values for $p$ results in a
doubly-exponential rate of progress, which leads to an
expected-$O(\log \log n)$-round algorithm for computing a $2$-ruling set of $H$.
See \cite{BHP12} for more details.

\textbf{Challenges for \textsc{BipartiteFacLoc}.} The same algorithmic framework
can be applied to \textsc{BipartiteFacLoc}; however, challenges arise in trying
to implement the ruling-set computation on a bipartite communication network. As
in \textsc{CliqueFacLoc} \cite{BHP12}, we define a particular graph $H$ on the
set of facilities with edges connecting pairs of facilities whose connection
cost is bounded above. Note that there is no explicit notion of connection cost
between facilities, but we use a natural extension of the facility-client
connection costs $D(\cdot,\cdot)$ and define for each
$x_i, x_j \in \mathcal{F}$,
$D(x_i,x_j) := \min_{y \in \mathcal{C}} D(x_i,y) + D(x_j,y)$. The main
algorithmic step now is to compute a $2$-ruling set on the graph $H$. However,
difficulties arise because $H$ is not a subgraph of the communication network $G$, as it was in the
\textsc{CliqueFacLoc} setting. In fact, initially a facility $x_i$ does not even
know to which other facilities it is adjacent to in $H$. This adjacency
knowledge is collectively available only to the clients. A client $y$
\textit{witnesses} edge $\{x_i,x_j\}$ in $H$ if $D(x_i, y) + D(x_j, y)$ is bounded
above by a pre-computed quantity associated with the facility-pair $x_i, x_j$.
However, (initially) an individual client $y$ cannot certify the
\textit{non-existence} of any potential edge between two facilities in $H$; as,
unbeknownst to $y$, some other client may be a witness to that edge.
Furthermore, the same edge $\{x_i, x_j\}$ could have many client-witnesses. This
``affirmative-only'' adjacency knowledge and the duplication of this knowledge
turn out to be key obstacles to overcome. For example, in this setting, it seems
difficult to even figure out how many edges $H$ has.

Thus, an example of a problem we need to solve is this: without knowing the
number of edges in $H$, how do we correctly pick a probability $p$ that will
induce a random subgraph $H'$ with $\Theta(n)$ edges? Duplication of knowledge
of $H$ leads to another problem as well. Suppose we did manage to pick a
``correct'' value of $p$ and have induced a subgraph $H'$ having $\Theta(n)$
edges. In the solution to \textsc{CliqueFacLoc}, we were able to deliver all of
$H'$ to a single node (in fact, to every node). In the bipartite setting, how do we deliver $H'$ to a
single node given that even though it has $O(n)$ edges, information duplication
can cause the sum of the number of adjacencies witnessed by the clients to be as high as
$\Omega(n^2)$?

We introduce new techniques to solve each of these problems. These techniques
are sketched below.

\begin{itemize}
\item\textbf{Message dissemination with duplicates.} We model the problem of
delivering all of $H'$ to a single node as the following message-dissemination
problem on a complete bipartite graph.
\vspace{2mm}

\begin{quote}
\textbf{Message Dissemination with Duplicates (MDD).}\\
Given a bipartite graph $G = \mathcal{F} + \mathcal{C}$, with
$n_f := |\mathcal{F}|$ and $n_c := |\mathcal{C}|$, suppose that there are $n_f$
messages that we wish to be known to all client nodes in $\mathcal{C}$.
Initially, each client possesses some subset of the $n_f$ messages, with each
message being possessed by at least one client. Suppose, though, that no client
$y_j$ has any information about which of its messages are also held by any other
client. Disseminate all $n_f$ messages to each client in the network in
expected-sub-logarithmic time.
\end{quote}
\vspace{2mm}

We solve this problem by presenting an algorithm that utilizes probabilistic
hashing to iteratively reduce the number of duplicate copies of each message.
Note that if no message exists in duplicate, then the total number of messages
held is only $n_f$, and each can be sent to a distinct facility which can then
broadcast it to every client. The challenge, then, lies in coordinating
bandwidth usage so as to avoid ``bottlenecks'' that could be caused by message
duplication.
Our algorithm for MDD runs in $O(\log\log \min\{n_f, n_c\})$ rounds in expectation.
\vspace{2mm}

\item\textbf{Random walk over a probability space.} Given the difficulty of
quickly acquiring even basic information about $H$ (e.g., how many edges does it
have?), we have no way of setting the value of $p$ correctly. So we design an
algorithm that performs a random walk over a space of $O(\log \log n_f)$
probabilities. The algorithm picks a probability $p$, uses this to induce a
random subgraph $H'$ of $H$, and attempts to disseminate $H'$ to all clients
within $O(\log \log \min \{n_f, n_c\})$ rounds. If this dissemination succeeds,
$p$ is modified in one way (increased appropriately), otherwise $p$ is modified differently (decreased appropriately). 
This technique can be modeled as a random walk on a probability space consisting of
$O(\log\log n_f)$ elements, where the elements are distinct values that $p$ can take.
We show that after a random walk of length at most $O(\log \log n_f)$, sufficiently 
many edges of $H$ are removed, leading to $O(\log\log n_f)$ levels of progress.
Thus we have a total of $O((\log \log n_f)^2)$ steps and since in each step 
an instance of MDD is solved for disseminating adjacencies, we obtain 
an expected-$O((\log \log n_f)^2 \cdot \log \log \min \{n_f, n_c\})$-round
algorithm for computing a $2$-ruling set of $H$.

\end{itemize}

To summarize, our paper makes three main technical contributions. (i) We show
(in Section \ref{section:Reduction}) that the framework developed in
\cite{BHP12} to solve \textsc{CliqueFacLoc} can be used, with appropriate
modifications, to solve \textsc{BipartiteFacLoc}. Via this algorithmic
framework, we reduce \textsc{BipartiteFacLoc} to the problem of computing a
$2$-ruling set of a graph induced by facilities in a certain way. (ii) In order
to compute a 2-ruling set of a graph, we need to disseminate graph adjacencies
whose knowledge is distributed among the clients with possible duplication. We
model this as a message dissemination problem and show (in Section
\ref{section:Dissemination}), using a probabilistic hashing scheme, how to
efficiently solve this problem on a complete bipartite graph. (iii) Finally, we
present (in Section \ref{section:2RulingSet}) an algorithm that performs a
random walk over a probability space to efficiently compute a 2-ruling set of a
graph, without even basic information about the graph. This algorithm repeatedly
utilizes the procedure for solving the message-dissemination problem mentioned
above.
\vspace{2mm}

\section{Reduction to the Ruling Set Problem}
\label{section:Reduction}

In this section we reduce \textsc{BipartiteFacLoc} to the ruling set problem on
a certain graph induced by facilities. The reduction is achieved via the
distributed facility location algorithm called \textsc{LocateFacilities} and
shown as Algorithm \ref{alg:bipartite_facloc}. This algorithm is complete except
that it calls a subroutine, \textsc{RulingSet}$(H, s)$ (in Step 4), to compute
an $s$-ruling set of a certain graph $H$ induced by facilities. In this section
we first describe Algorithm \ref{alg:bipartite_facloc} and then present its
analysis. It is easily observed that all the steps in Algorithm
\ref{alg:bipartite_facloc}, except the one that calls \textsc{RulingSet}$(H, s)$
take a total of $O(1)$ communication rounds. Thus the running time of
\textsc{RulingSet}$(H, s)$ essentially determines the running time of Algorithm
\ref{alg:bipartite_facloc}. Furthermore, we show that if $F^*$ is the subset of
facilities opened by Algorithm \ref{alg:bipartite_facloc}, then
$FacLoc(F^*) = O(s) \cdot OPT$. In the remaining sections of the paper we show
how to implement \textsc{RulingSet}$(H, 2)$ in expected
$O((\log\log n_f)^2 \cdot \log\log \min\{n_f, n_c\})$ rounds. This yields an
expected $O((\log\log n_f)^2 \cdot \log\log \min\{n_f, n_c\})$-round,
$O(1)$-approximation algorithm for \textsc{BipartiteFacLoc}.

\subsection{Algorithm}
\label{subsection:algorithm}

Given $\mathcal{F}$, $\mathcal{C}$, $D(\cdot,\cdot)$, and $\{f_i\}$, define the
\textit{characteristic radius} $r_i$ of facility $x_i$ to be the nonnegative
real number satisfying $\sum_{y \in B(x_i,r_i)} (r_i - D(x_i,y)) = f_i$, where
$B(x,r)$ (the \textit{ball} of radius $r$) denotes the set of clients $y$ such
that $D(x,y) \leq r$. This notion of a characteristic radius was first
introduced by Mettu and Plaxton \cite{MettuPlaxton03}, who use it to drive their
sequential, greedy algorithm. We extend the client-facility distance function
$D(\cdot,\cdot)$ to facility-facility distances; let
$D: \mathcal{F} \times \mathcal{F} \rightarrow \mathbb{R}^+ \cup \{0\}$ be
defined by
$D(x_i,x_j) = \min_{y_k \in \mathcal{C}} \{D(x_i,y_k) + D(x_j,y_k)\}$. 
With these definitions in place we are ready to describe Algorithm
\ref{alg:bipartite_facloc}. The algorithm consists of three stages, which we now
describe.

\begin{algo}
\textbf{Input:} A complete bipartite graph $G$ with partition $(\mathcal{F}, \mathcal{C})$; 
(bipartite) metric $D(\cdot,\cdot)$; opening costs $\{f_i\}_{i=1}^{n_f}$; a
sparsity parameter $s \in \mathbb{Z}^+$\\
\textbf{Assumption:} Each facility knows its own opening cost and its distances
to all clients; each client knows its distances to all facilities\\
\textbf{Output:} A subset of facilities (a \textit{configuration}) to be
declared open.
{\small
\begin{tabbing}
......\=a..\=b..\=c..\=d..\=e..\=f..\=g..\=h..\=i..\=j..\=k..\=l\kill
1.\>Each facility $x_i$ computes and broadcasts its radius $r_i$ to all clients;
$r_0 := \min_i r_i$.\\
2.\>Each client computes a partition of the facilities into classes $\{V_k\}$
such that $3^k \cdot r_0 \leq r_i < 3^{k+1} \cdot r_0$ for $x_i \in V_k$.\\
3.\>For $k = 0, 1, \ldots$, define a graph $H_k$ with vertex set $V_k$ and edge
set:\\
\>\>$\{\{x_i, x_{i'}\} \mid x_i, x_{i'} \in V_k \mbox{ and }
D(x_i,x_{i'}) \leq r_i + r_{i'}$\}\\
\>\>(Observe from the definition of facility distance that such edges may be
known to as few as one client,\\
\>\>or as many as all of them.)\\
4.\>All nodes in the network use procedure \textsc{RulingSet}($\cup_k H_k, s$)
to compute a 2-ruling set $T$ of $\cup_k H_k$.\\
\>\>$T$ is known to every client. We use $T_k$ to denote $T \cap V_k$.\\
5.\>Each client $y_j$ sends an \texttt{open} message to each facility $x_i$, if
and only if both of the following conditions hold:\\
\>\>(i) $x_i$ is a member of the set $T_k \subseteq H_k$, for some $k$.\\
\>\>(ii) $y_j$ is not a witness to the existence of a facility $x_{i'}$
belonging to a class $H_{k'}$, with $k' < k$,\\
\>\>\>such that $D(x_i,x_{i'}) \leq 2 r_i$.\\
6.\>Each facility $x_i$ opens, and broadcasts its status as such, if and only
if $x_i$ received an \texttt{open} message from\\
\>\>every client.\\
7.\>Each client connects to the nearest open facility.
\end{tabbing}}
\caption{\textsc{LocateFacilities}}
\label{alg:bipartite_facloc}
\end{algo}

\paragraph{Stage 1. (Steps 1-2)} Each facility knows its own opening cost and the distances
to all clients. So in Step 1 facility $x_i$ computes $r_i$ and broadcasts that
value to all clients. Once this broadcast is complete, each client knows all of
the $r_i$ values. This enables every client to compute the same partition of the
facilities into classes as follows (Step 2). Define the special value
$r_0 := \min_{1 \leq i \leq n_f} \{r_i\}$. Define the class $V_k$, for
$k = 0, 1, \ldots$, to be the set of facilities $x_i$ such that
$3^k \cdot r_0 \leq r_i < 3^{k+1} \cdot r_0$. Every client computes the class
into which each facility in the network falls.

\paragraph{Stage 2. (Steps 3-4)} Now that the facilities are divided into classes having
comparable $r_i$'s, and every client knows which facility is in each class, we
focus our attention on class $V_k$. Suppose $x_i, x_{i'} \in V_k$. Then we
define $x_i$ and $x_{i'}$ to be \textit{adjacent} in class $V_k$ if
$D(x_i,x_{i'}) \leq r_i + r_{i'}$ (Step 3). These adjacencies define the graph
$H_k$ with vertex set $V_k$. Note that two facilities $x_i$, $x_{i'}$ in class
$V_k$ are adjacent if and only if there is at least one client \textit{witness}
for this adjacency. Next, the network computes an $s$-ruling set $T$ of $\cup_k H_k$
with procedure \textsc{RulingSet}() (Step 4).
We describe a super-fast implementation of \textsc{RulingSet}() in Section
\ref{section:2RulingSet}. After a ruling set $T$ has been constructed, 
every client knows all the members of $T$.
Since the $H_k$'s are disjoint, $T_k := T \cap V_k$ is a 2-ruling set of $H_k$ for each $k$.

\paragraph{Stage 3. (Steps 5-7)} Finally, a client $y_j$ sends an \texttt{open} message to
facility $x_i$ in class $V_k$ if (i) $x_i \in T_k$, and (ii) there is no
facility $x_{i'}$ of class $V_{k'}$ such that
$D(x_i,y_j) + D(x_{i'},y_j) \leq 2 r_i$, and for which $k' < k$ (Step 5). A
facility opens if it receives \texttt{open} messages from all clients (Step 6).
Lastly, open facilities declare themselves as such in a broadcast, and every
client connects to the nearest open facility (Step 7).

Algorithm \textsc{LocateFacilities} is complete except for the call to the \textsc{RulingSet} procedure.
The remaining sections of the paper describe the implementation and analysis of \textsc{RulingSet}.

\subsection{Analysis}

The approximation-factor analysis of Algorithm \ref{alg:bipartite_facloc} is
similar to the analysis of our algorithm for \textsc{CliqueFacLoc} \cite{BHP12,BHP12arxiv}.
First, we show a lower bound on the cost of \textit{any} solution to
\textsc{BipartiteFacLoc}. In order to do so, we define $\overline{r}_j$ (for
$y_j \in \mathcal{C}$) as
$\overline{r}_j = \min_{1 \leq i \leq n_f} \{r_i + D(x_i,y_j)\}$. This concept
was introduced and motivated in \cite{BHP12,BHP12arxiv}.
Specifically, we show that the cost of any solution to the facility location problem
is bounded before by $\frac{1}{6} \cdot \sum_{j=1}^{n_c} \overline{r}_j$.
Subsequently, we show that the solution computed by Algorithm \textsc{LocateFacilities}
has cost that is $O(s)$ times $\sum_{j=1}^{n_c} \overline{r}_j$.
Thus, guaranteeing $s = O(1)$, yields an $O(1)$-approximation.


\subsubsection{Approximation Analysis - Lower Bound}

We start the lower bound proof by extending the sequential metric facility
location algorithm (and analysis) of Mettu and Plaxton \cite{MettuPlaxton03} on a clique
network to the bipartite setting. This part of the analysis closely follows
\cite{MettuPlaxton03} and we include it mainly for completeness.
We start by presenting the bipartite version of the Mettu-Plaxton algorithm.
The algorithm is greedy in that it considers facilities in non-decreasing order
of their $r$-values and opens a facility only if there is no already-open
facility within 2 times the $r$-value of the facility being considered.
Below we use the $D(x, F)$, where $x \in \mathcal{F}$ and $F \subseteq \mathcal{F}$,
to denote $\min_{x' \in F} D(x, x')$.

\begin{algo}
\textbf{Input:} $\mathcal{F}$, $\mathcal{C}$, $D(\cdot,\cdot)$, $\{f_i\}$\\
\textbf{Output:} A subset of facilities to open
\small{
\begin{tabbing}
......\=a..\=b..\=c..\=d..\=e..\=f..\=g..\=h..\=i..\=j..\=k..\=l\kill
1.\>Let $F_0 = \emptyset$.\\
2.\>For each facility $x_i$, compute the characteristic radius $r_i$.\\
3.\>Let $\varphi$ be a permutation of $\{1, \ldots, n_f\}$ such that for $1 \leq i < i' \leq n_f$,
$r_{\varphi(i)} \leq r_{\varphi(i')}$.\\
4.\>For $i = 1$ to $n_f$, if $D(x_{\varphi(i)},F_{i-1}) > 2 r_{\varphi(i)}$, then set
$F_i = F_{i-1} \cup \{x_{\varphi(i)}\}$; else set $F_i = F_{i-1}$.\\
5.\>Return $F_{MP} = F_{n_f}$.
\end{tabbing}}
\caption{Bipartite Mettu-Plaxton Algorithm}
\label{alg:BipartiteMP}
\end{algo}

\noindent The running time of Algorithm \ref{alg:BipartiteMP} is not important
to us, but the approximation factor is. Let $F_{MP}$ denote the subset of
facilities opened by algorithm \ref{alg:BipartiteMP}. 
The following series of lemmas (Lemmas \ref{lemma:select_nbr} to \ref{lemma:compare_charge}) lead to Theorem \ref{theorem:MPcost},
which shows that $FacLoc(F_{MP})$ is within 3 times the optimal facility opening cost.

\begin{lemma}
For any facility $x_i$, there exists a facility $x_j \in F_{MP}$ such that
$\varphi^{-1}(j) \leq \varphi^{-1}(i)$ (i.e. $r_j \leq r_i$) and
$D(x_i,x_j) \leq 2 r_i$. (Note that $x_j$ may be $x_i$ itself.)
\label{lemma:select_nbr}
\end{lemma}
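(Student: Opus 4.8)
The plan is to argue by a simple case analysis on what Algorithm \ref{alg:BipartiteMP} does at the moment it processes $x_i$. Let $t = \varphi^{-1}(i)$ be the index at which $x_i$ is considered in the loop of Step 4, so $x_i = x_{\varphi(t)}$. Two structural facts will do all the work: (a) the sequence $F_0 \subseteq F_1 \subseteq \cdots \subseteq F_{n_f} = F_{MP}$ is monotone, since each $F_\ell$ is obtained from $F_{\ell-1}$ by possibly adjoining a single facility; and (b) $\varphi$ lists facilities in non-decreasing order of $r$-value, so if a facility $x_j$ belongs to some $F_\ell$ with $\ell < t$, then $\varphi^{-1}(j) \le \ell < t = \varphi^{-1}(i)$ and hence $r_j \le r_i$. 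I would state both facts, together with the convention $D(x,\emptyset) = +\infty$, at the outset.

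First I would dispose of the case in which the test of Step 4 succeeds for $x_i$, i.e.\ $D(x_{\varphi(t)}, F_{t-1}) > 2 r_{\varphi(t)}$; note this subsumes the degenerate case $t = 1$, where $F_{t-1} = F_0 = \emptyset$ and the distance is $+\infty$. Then the algorithm sets $F_t = F_{t-1} \cup \{x_i\}$, so $x_i \in F_{MP}$ by monotonicity, and the claim holds by taking $x_j = x_i$: indeed $\varphi^{-1}(i) \le \varphi^{-1}(i)$ and $D(x_i, x_i) = 0 \le 2 r_i$.

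Otherwise the test fails, i.e.\ $D(x_i, F_{t-1}) \le 2 r_i$. Since $D(x_i, F_{t-1}) = \min_{x' \in F_{t-1}} D(x_i, x')$, the minimum is attained by some facility $x_j \in F_{t-1}$, and that $x_j$ satisfies $D(x_i, x_j) \le 2 r_i$. As $x_j$ was added to some $F_\ell$ with $\ell \le t - 1 < t$, fact (b) gives $\varphi^{-1}(j) < \varphi^{-1}(i)$ and thus $r_j \le r_i$; and $F_{t-1} \subseteq F_{n_f} = F_{MP}$ by fact (a), so $x_j \in F_{MP}$. This establishes the lemma in both cases.

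As for the main obstacle: there really is none — the statement is essentially the definition of the greedy selection rule, read backwards. The only points needing (minor) care are the convention that $D(x,\emptyset) = +\infty$, so that the first facility in $\varphi$-order is always opened and the case split is exhaustive, and the monotonicity of the $F_\ell$'s, which is exactly what lets us upgrade ``$x_j \in F_{t-1}$'' to ``$x_j \in F_{MP}$''. Making these two observations explicit up front keeps the two-case argument clean.
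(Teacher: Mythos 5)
Your proof is correct and takes essentially the same route as the paper, which compresses the identical observation into a one-line contradiction: if no such $x_j$ existed, then $D(x_i,F_{\varphi^{-1}(i)-1}) > 2 r_i$ and $x_i$ itself would have been opened. The only nit is that under the paper's extended metric $D(x_i,x_i) = 2\min_{y \in \mathcal{C}} D(x_i,y)$, which is not literally $0$, but it is still at most $2 r_i$ (the ball $B(x_i,r_i)$ contains a client whenever $f_i > 0$), so your first case goes through.
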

\begin{proof}
Suppose not. Then $D(x_i,F_{\varphi^{-1}(i)-1}) > 2 r_i$, so $x_i$ should have
been added to $F_{MP}$, which is a contradiction.
\end{proof}

\begin{lemma}
Let $x_i, x_j \in F_{MP}$. Then $D(x_i,x_j) > 2 \cdot \max \{r_i, r_j\}$.
\label{lemma:MP_sparse}
\end{lemma}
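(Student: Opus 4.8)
The plan is to exploit the greedy ``no nearby open facility'' test in Step 4 of Algorithm \ref{alg:BipartiteMP}, applied at the moment the \emph{later} of the two facilities is considered. By symmetry of the claimed inequality, I would assume without loss of generality that $x_i$ is considered no later than $x_j$ in the permutation order, i.e. $\varphi^{-1}(i) < \varphi^{-1}(j)$ (the two indices are distinct since $x_i \neq x_j$). By the defining property of $\varphi$ in Step 3, this ordering gives $r_i \leq r_j$, so that $\max\{r_i, r_j\} = r_j$; it therefore suffices to show $D(x_i, x_j) > 2 r_j$.

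Next I would track the set $F_{i-1}$ just before $x_j$ is processed. Write $t = \varphi^{-1}(j)$, so $x_j = x_{\varphi(t)}$. Since $x_j$ was in fact added to the solution at step $t$, the ``else'' branch was not taken, which means the test succeeded: $D(x_{\varphi(t)}, F_{t-1}) > 2 r_{\varphi(t)}$, i.e. $D(x_j, F_{t-1}) > 2 r_j$. On the other hand, $x_i$ was added at step $\varphi^{-1}(i) < t$ and the sets $F_0 \subseteq F_1 \subseteq \cdots$ are monotone (Step 4 only ever adds elements), so $x_i \in F_{t-1}$. Hence, since $D(x_j, F_{t-1}) = \min_{x' \in F_{t-1}} D(x_j, x')$ is a minimum over a set containing $x_i$, we get $D(x_i, x_j) = D(x_j, x_i) \geq D(x_j, F_{t-1}) > 2 r_j = 2 \max\{r_i, r_j\}$, which is exactly the claim.

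There is essentially no hard obstacle here; the only point requiring a little care is the bookkeeping around ties and the direction of the permutation inequality — making sure that ``considered earlier'' correctly translates (via Step 3) into ``has the smaller $r$-value'', so that the bound $2 r_j$ coming out of the greedy test is indeed $2\max\{r_i, r_j\}$ and not the weaker $2\min\{r_i, r_j\}$. Everything else is just unwinding the definitions of $F_{t-1}$ and of $D(\cdot, F)$.
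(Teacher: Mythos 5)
Your proof is correct and follows exactly the same route as the paper's: assume WLOG that $x_i$ precedes $x_j$ in the permutation so that $r_i \leq r_j$, note that $x_j$'s admission forces $D(x_j, F_{\varphi^{-1}(j)-1}) > 2r_j$, and use $x_i \in F_{\varphi^{-1}(j)-1}$ to conclude. Your extra bookkeeping about monotonicity of the $F_t$'s is a harmless elaboration of what the paper leaves implicit.
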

\begin{proof}
Without loss of generality, assume that $\varphi^{-1}(i) < \varphi^{-1}(j)$.
Then $r_i \leq r_j$, and since $x_j$ was added to $F_{MP}$, it must have been
the case that $D(x_j,F_{\varphi^{-1}(j)-1})$ was greater than $2 r_j$. As
$x_i \in F_{\varphi^{-1}(j)-1}$, we conclude that
$D(x_i,x_j) > 2 r_j = 2 \cdot \max\{r_i, r_j\}$.
\end{proof}

An important contribution of \cite{MettuPlaxton03} was a standard way of
``charging'' the cost of a facility location solution to clients.
For a client $y_j \in \mathcal{C}$ and a facility subset $F$, the
\textit{charge} of $y_j$ with respect to $F$ is defined as
\[charge(y_j,F) = D(F,y_j) + \sum\limits_{x_i \in F} \max \{0, r_i - D(x_i, y_j)\}\]
Here $D(F, y)$, for $F \subseteq \mathcal{F}$ and $y \in \mathcal{C}$, is used as shorthand for
$\min_{x \in \mathcal{F}} D(x, y)$.
In the following lemma, we use simple algebraic manipulation to show that for any facility subset
$F$, the cost of $F$ is correctly distributed to the ``charge'' associated with each client,
as per the definition  $charge(y_j,F)$.

\begin{lemma}
For any subset $F$, $\sum_{y_j \in \mathcal{C}} charge(y_j,F) = FacLoc(F)$.
\label{lemma:sum_charge}
\end{lemma}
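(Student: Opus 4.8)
The plan is to expand the right-hand side $FacLoc(F) = \sum_{x_i \in F} f_i + \sum_{y_j \in \mathcal{C}} D(F, y_j)$ and the sum of charges on the left, and match them term by term. The connection-cost parts are immediately equal: the term $\sum_{y_j} D(F, y_j)$ appears verbatim inside $\sum_{y_j} charge(y_j, F)$. So the whole lemma reduces to showing that the opening costs are accounted for correctly, i.e.
\[
\sum_{x_i \in F} f_i \;=\; \sum_{y_j \in \mathcal{C}} \sum_{x_i \in F} \max\{0, r_i - D(x_i, y_j)\}.
\]
Swapping the order of summation on the right, it suffices to prove that for each individual facility $x_i$,
\[
f_i \;=\; \sum_{y_j \in \mathcal{C}} \max\{0, r_i - D(x_i, y_j)\}.
\]

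First I would observe that $\max\{0, r_i - D(x_i, y_j)\}$ is nonzero exactly when $D(x_i, y_j) \le r_i$, i.e. exactly when $y_j \in B(x_i, r_i)$, in which case it equals $r_i - D(x_i, y_j)$. Hence the right-hand sum collapses to $\sum_{y \in B(x_i, r_i)} (r_i - D(x_i, y))$. But this is precisely the defining equation of the characteristic radius $r_i$ given in Subsection \ref{subsection:algorithm}: $\sum_{y \in B(x_i, r_i)} (r_i - D(x_i, y)) = f_i$. This closes the per-facility identity, and summing over $x_i \in F$ and adding back the connection-cost terms yields the claim.

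There is essentially no obstacle here — the lemma is, as the authors note, ``simple algebraic manipulation.'' The only point requiring a moment's care is the interchange of the two finite summations over $\mathcal{C}$ and $F$, which is trivially valid since both index sets are finite, and the bookkeeping that the $\max\{0,\cdot\}$ term vanishes outside the ball $B(x_i, r_i)$ so that the sum over all clients agrees with the sum over the ball. Everything else is just unwinding the definitions of $FacLoc(\cdot)$, $charge(\cdot,\cdot)$, and the characteristic radius.
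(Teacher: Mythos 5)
Your proposal is correct and follows exactly the same route as the paper's proof: separate the connection-cost term, interchange the two finite sums, observe that $\max\{0, r_i - D(x_i,y_j)\}$ vanishes outside $B(x_i,r_i)$, and invoke the defining equation $\sum_{y \in B(x_i,r_i)} (r_i - D(x_i,y)) = f_i$ of the characteristic radius. No differences worth noting.
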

\begin{proof}
Observe that
\begin{align*}
\sum\limits_{y_j \in \mathcal{C}} charge(y_j,F) &= %
\sum\limits_{y_j \in \mathcal{C}} D(F,y_j) \; + \; %
\sum\limits_{x_i \in F} \sum\limits_{y_j \in \mathcal{C}} %
\max \{0, r_i - D(x_i,y_j)\}\\
&= \sum\limits_{y_j \in \mathcal{C}} D(F,y_j) \; + \; %
\sum\limits_{x_i \in F} \sum\limits_{y_j \in B(x_i,r_i)} (r_i - D(x_i,y_j))\\
&= \sum\limits_{y_j \in \mathcal{C}} %
D(F,y_j) \; + \; \sum\limits_{x_i \in F} f_i\\[2mm]
&= FacLoc(F)\\
\end{align*}
\end{proof}

\begin{lemma}
Let $y_j$ be a client, let $F$ be a subset of facilities, and let
$x_i \in F$. If $D(x_i,y_j) = D(F,y_j)$, then
$charge(y_j,F) \geq \max\{r_i, D(x_i,y_j)\}$.
\label{lemma:charge_any}
\end{lemma}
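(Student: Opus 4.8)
The plan is to prove the statement directly from the definition of $charge(y_j, F)$ by bounding the relevant summation from below using the single term corresponding to $x_i$. Recall that
\[
charge(y_j, F) = D(F, y_j) + \sum_{x_k \in F} \max\{0, r_k - D(x_k, y_j)\}.
\]
Since every term in the sum is nonnegative and $x_i \in F$, we immediately get $charge(y_j, F) \geq D(F, y_j) + \max\{0, r_i - D(x_i, y_j)\}$. Using the hypothesis $D(x_i, y_j) = D(F, y_j)$, this becomes $charge(y_j, F) \geq D(x_i, y_j) + \max\{0, r_i - D(x_i, y_j)\}$.

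Next I would argue by cases on whether $r_i \geq D(x_i, y_j)$ or not. If $r_i \geq D(x_i, y_j)$, then $\max\{0, r_i - D(x_i, y_j)\} = r_i - D(x_i, y_j)$, so the bound above simplifies to $charge(y_j, F) \geq D(x_i, y_j) + r_i - D(x_i, y_j) = r_i = \max\{r_i, D(x_i, y_j)\}$. If instead $r_i < D(x_i, y_j)$, then $\max\{0, r_i - D(x_i, y_j)\} = 0$, and the bound gives $charge(y_j, F) \geq D(x_i, y_j) = \max\{r_i, D(x_i, y_j)\}$. In either case we obtain the claimed inequality $charge(y_j, F) \geq \max\{r_i, D(x_i, y_j)\}$.

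There is essentially no obstacle here — the lemma is a short algebraic observation that isolates one facility's contribution to a client's charge. The only mild subtlety is keeping the two branches of the $\max$ straight and observing that in both branches the ``loss'' $D(x_i, y_j)$ on the $\max$ side is exactly compensated (or dominated) by the $D(F, y_j)$ term, which equals $D(x_i, y_j)$ by hypothesis. I would present it as a two-line case analysis. This lemma is presumably a stepping stone: it will be combined with Lemma~\ref{lemma:select_nbr} and Lemma~\ref{lemma:MP_sparse} in the subsequent lemmas to relate the charge of each client under $F_{MP}$ to the quantities $\overline{r}_j$ and ultimately to establish Theorem~\ref{theorem:MPcost}.
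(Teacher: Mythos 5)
Your proof is correct and follows essentially the same route as the paper's: both lower-bound the sum in the definition of $charge(y_j,F)$ by the single term for $x_i$, substitute $D(F,y_j) = D(x_i,y_j)$, and split into the two cases $r_i \geq D(x_i,y_j)$ and $r_i < D(x_i,y_j)$. No issues.
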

\begin{proof}
If $D(x_i,y_j) > r_i$, then $charge(y_j,F) \geq D(F,y_j) = D(x_i,y_j) > r_i$. If
$D(x_i,y_j) \leq r_i$, then $charge(y_j,F) \geq D(F,y_j) + (r_i - D(x_i,y_j)) =
D(x_i,y_j) + (r_i - D(x_i,y_j)) = r_i \geq D(x_i,y_j)$.
\end{proof}

\begin{lemma}
Let $y_j$ be a client and let $x_i \in F_{MP}$. If $y_j \in B(x_i,r_i)$, then
$charge(y_j,F_{MP}) = r_i$.
\label{lemma:charge_near}
\end{lemma}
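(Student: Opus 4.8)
The plan is to unpack the definition of $charge(y_j, F_{MP})$ and show that the client $y_j$ ``sees'' the open facility $x_i$ and essentially no other member of $F_{MP}$. Concretely, $charge(y_j, F_{MP}) = D(F_{MP}, y_j) + \sum_{x_k \in F_{MP}} \max\{0,\ r_k - D(x_k, y_j)\}$, and I will argue that (a) every term in the sum with $x_k \neq x_i$ is zero, (b) the $x_i$-term equals $r_i - D(x_i, y_j)$, and (c) $D(F_{MP}, y_j) = D(x_i, y_j)$. Adding these gives exactly $r_i$.

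The heart of the argument is controlling the other open facilities $x_k \in F_{MP} \setminus \{x_i\}$ using the sparsity guarantee of Lemma~\ref{lemma:MP_sparse} together with the (extended) triangle inequality for facility-facility distances. First I would note that, by the definition $D(x_i,x_k) = \min_{y \in \mathcal{C}} \{D(x_i,y) + D(x_k,y)\}$, we have $D(x_i,x_k) \leq D(x_i,y_j) + D(x_k,y_j)$. Combining this with Lemma~\ref{lemma:MP_sparse}, which gives $D(x_i,x_k) > 2\max\{r_i,r_k\} \geq r_i + r_k$, yields $r_i + r_k < D(x_i,y_j) + D(x_k,y_j)$. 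Since $y_j \in B(x_i,r_i)$ means $D(x_i,y_j) \leq r_i$, this forces $D(x_k,y_j) > r_k$, so $\max\{0,\ r_k - D(x_k,y_j)\} = 0$. This handles (a). For (b), $y_j \in B(x_i,r_i)$ gives $D(x_i,y_j) \leq r_i$, so $\max\{0,\ r_i - D(x_i,y_j)\} = r_i - D(x_i,y_j)$.

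For (c), the same inequality $r_i + r_k < D(x_i,y_j) + D(x_k,y_j)$ with $D(x_i,y_j) \leq r_i$ also gives $D(x_k,y_j) > r_k \geq 0$, but more usefully I would extract $D(x_k, y_j) > r_i \geq D(x_i, y_j)$ (using $2\max\{r_i,r_k\} \geq 2r_i$ in the chain $2r_i \leq 2\max\{r_i,r_k\} < D(x_i,y_j) + D(x_k,y_j) \leq r_i + D(x_k,y_j)$). Hence among all members of $F_{MP}$, the distance $D(x_k, y_j)$ is strictly minimized at $x_k = x_i$, so $D(F_{MP}, y_j) = D(x_i, y_j)$. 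Substituting (a), (b), (c) into the definition gives $charge(y_j, F_{MP}) = D(x_i,y_j) + (r_i - D(x_i,y_j)) = r_i$.

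The main obstacle — though it is a mild one — is making sure the facility-facility triangle inequality is invoked correctly: $D(x_i, x_k)$ is defined as a minimum over \emph{all} clients, so it is automatically at most $D(x_i,y_j) + D(x_k,y_j)$ for the specific client $y_j$, and this is the only place the ``metric'' structure is needed. Everything else is the algebraic bookkeeping of plugging the two derived inequalities ($D(x_k,y_j) > r_k$ and $D(x_k,y_j) > D(x_i,y_j)$ for $x_k \neq x_i$) into the definition of $charge$.
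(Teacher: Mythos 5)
Your proposal is correct and follows essentially the same route as the paper: both arguments use Lemma~\ref{lemma:MP_sparse} together with $D(x_i,x_k) \leq D(x_i,y_j) + D(x_k,y_j)$ to conclude that every other open facility $x_k$ satisfies $D(x_k,y_j) > \max\{r_i, r_k\}$, which simultaneously kills the other summands and identifies $x_i$ as the nearest open facility. The paper states this in one compressed sentence; you have merely unpacked the same inequality chain into the three explicit pieces (a), (b), (c).
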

\begin{proof}
By Lemma \ref{lemma:MP_sparse}, there can be no other facility
$x_{i'} \in F_{MP}$, $i' \neq i$, such that
$D(x_{i'},y_j) \leq \max \{r_i, r_{i'}\}$, for then $D(x_i,x_{i'})$ would be at
most $2 \cdot \max \{r_i, r_{i'}\}$. Therefore
$charge(y_j,F_{MP}) = D(x_i,y_j) + (r_i - D(x_i,y_j)) = r_i$.
\end{proof}

\begin{lemma}
Let $y_j$ be a client and let $x_i \in F_{MP}$. If $y_j \notin B(x_i,r_i)$, then
$charge(y_j,F_{MP}) \leq D(x_i,y_j)$.
\label{lemma:charge_far}
\end{lemma}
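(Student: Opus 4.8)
The plan is to decompose $charge(y_j, F_{MP}) = D(F_{MP},y_j) + \sum_{x_k \in F_{MP}} \max\{0, r_k - D(x_k,y_j)\}$ and handle the two terms separately. The connection term is immediate: since $x_i \in F_{MP}$, we have $D(F_{MP},y_j) \le D(x_i,y_j)$. So the whole inequality reduces to controlling the ``overflow'' sum $\sum_{x_k \in F_{MP}} \max\{0, r_k - D(x_k,y_j)\}$, and I would split on whether any term in it is nonzero, i.e.\ whether $y_j$ lies in some ball $B(x_k, r_k)$ with $x_k \in F_{MP}$.

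If no such $x_k$ exists, every overflow term vanishes and $charge(y_j, F_{MP}) = D(F_{MP}, y_j) \le D(x_i, y_j)$, finishing this case. Otherwise, fix $x_k \in F_{MP}$ with $y_j \in B(x_k, r_k)$. By hypothesis $y_j \notin B(x_i, r_i)$, so $x_k \ne x_i$. I would then invoke (the argument behind) Lemma \ref{lemma:charge_near}: the sparsity of $F_{MP}$ (Lemma \ref{lemma:MP_sparse}) together with the triangle inequality through $y_j$ forces $x_k$ to be the \emph{only} facility of $F_{MP}$ whose ball contains $y_j$, and hence $charge(y_j, F_{MP}) = r_k$. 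It then remains to show $r_k \le D(x_i,y_j)$. For this I would combine the extended-distance bound $D(x_i,x_k) \le D(x_i,y_j) + D(x_k,y_j)$ with Lemma \ref{lemma:MP_sparse}: this gives $D(x_i,y_j) + D(x_k,y_j) \ge D(x_i,x_k) > 2\max\{r_i,r_k\} \ge 2 r_k$, and since $D(x_k,y_j) \le r_k$ (as $y_j \in B(x_k,r_k)$), rearranging yields $D(x_i,y_j) > r_k$, which is even slightly stronger than needed.

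The only real subtlety — a mild one — lies in the second case: one must notice that the hypothesis $y_j \notin B(x_i, r_i)$ makes the ``charging facility'' $x_k$ genuinely different from $x_i$, and then extract $r_k \le D(x_i, y_j)$ from the sparsity of $F_{MP}$ via a distance estimate routed through the client $y_j$. Everything else is direct substitution into the definition of $charge$ and the easy bound $D(F_{MP}, y_j) \le D(x_i, y_j)$.
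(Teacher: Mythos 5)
Your proposal is correct and follows essentially the same route as the paper: the same case split on whether $y_j$ lies in some ball $B(x_{i'},r_{i'})$ with $x_{i'}\in F_{MP}$, the same use of Lemma \ref{lemma:charge_near} to pin $charge(y_j,F_{MP})=r_{i'}$, and the same combination of Lemma \ref{lemma:MP_sparse} with the triangle inequality for the extended distance routed through $y_j$ to conclude $r_{i'}\le D(x_i,y_j)$. The only difference is cosmetic — you rearrange the inequalities rather than writing them as the paper's single chain.
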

\begin{proof}
If there is no $x_{i'} \in F_{MP}$ such that $y_j \in B(x_{i'},r_{i'})$, then
$charge(y_j,F_{MP}) = D(F_{MP},y_j) \leq D(x_i,y_j)$. If there is such an
$x_{i'}$, then by Lemma \ref{lemma:MP_sparse},
$D(x_i,x_{i'}) > 2 \cdot \max \{r_i, r_{i'}\}$. By Lemma
\ref{lemma:charge_near}, then,
\begin{align*}
charge(y_j,F_{MP}) &= r_{i'}\\[1mm]
&\leq D(x_i,x_{i'}) - r_{i'}\\[1mm]
&\leq D(x_i,x_{i'}) - D(x_{i'},y_j)\\[1mm]
&\leq (D(x_i,x_{i'}) - D(x_{i'},y_j) - D(x_i,y_j)) + D(x_i,y_j)\\[1mm]
&\leq D(x_i,y_j)\\
\end{align*}
\end{proof}

\begin{lemma}
For any client $y_j$ and subset $F$,
$charge(y_j,F_{MP}) \leq 3 \cdot charge(y_j,F)$.
\label{lemma:compare_charge}
\end{lemma}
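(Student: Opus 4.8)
The goal is to show $charge(y_j, F_{MP}) \le 3 \cdot charge(y_j, F)$ for an arbitrary client $y_j$ and an arbitrary facility subset $F$. The plan is to fix the facility $x_i \in F$ that serves $y_j$ optimally in $F$, i.e., with $D(x_i, y_j) = D(F, y_j)$, and then use Lemma \ref{lemma:select_nbr} to find a ``nearby'' facility $x_j \in F_{MP}$ with $r_j \le r_i$ and $D(x_i, x_j) \le 2 r_i$ (renaming to avoid the index clash with the client $y_j$ — I would call this facility $x_\ell$). The strategy is then to bound $charge(y_j, F_{MP})$ from above by a quantity involving $x_\ell$, and to bound $charge(y_j, F)$ from below by $\max\{r_i, D(x_i, y_j)\}$ via Lemma \ref{lemma:charge_any}, and show the former is at most $3$ times the latter. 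A case split on whether $y_j$ lies inside the ball $B(x_\ell, r_\ell)$ is natural, using Lemma \ref{lemma:charge_near} in the first case and Lemma \ref{lemma:charge_far} in the second.

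\textbf{Case 1: $y_j \in B(x_\ell, r_\ell)$.} Here Lemma \ref{lemma:charge_near} gives $charge(y_j, F_{MP}) = r_\ell \le r_i$. Since $charge(y_j, F) \ge \max\{r_i, D(x_i,y_j)\} \ge r_i$ by Lemma \ref{lemma:charge_any}, we get $charge(y_j, F_{MP}) \le r_i \le charge(y_j, F)$, which is even stronger than needed.

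\textbf{Case 2: $y_j \notin B(x_\ell, r_\ell)$.} Here Lemma \ref{lemma:charge_far} gives $charge(y_j, F_{MP}) \le D(x_\ell, y_j)$. Now I would use the (bipartite) triangle inequality together with the facility-distance definition $D(x_i, x_\ell) = \min_{y} \{D(x_i, y) + D(x_\ell, y)\} \le D(x_i, y_j) + D(x_\ell, y_j)$ — actually I want the reverse direction, bounding $D(x_\ell, y_j)$ in terms of $D(x_i, x_\ell)$ and $D(x_i, y_j)$. The correct tool is the triangle inequality stated in the introduction: $D(x_\ell, y_j) \le D(x_\ell, y') + D(y', x_i) + D(x_i, y_j)$ for any witnessing client $y'$ of the pair $x_i, x_\ell$; minimizing over $y'$ gives $D(x_\ell, y_j) \le D(x_i, x_\ell) + D(x_i, y_j)$. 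Combining with $D(x_i, x_\ell) \le 2 r_i$ yields $charge(y_j, F_{MP}) \le 2 r_i + D(x_i, y_j)$. Finally, since $charge(y_j, F) \ge \max\{r_i, D(x_i, y_j)\}$, we have $2 r_i + D(x_i, y_j) \le 2 \cdot charge(y_j, F) + charge(y_j, F) = 3 \cdot charge(y_j, F)$, completing this case.

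\textbf{Main obstacle.} The routine part is the algebra in Case 2; the only real subtlety is correctly invoking the metric/triangle-inequality structure to relate the facility-facility distance $D(x_i, x_\ell)$ back to $D(x_\ell, y_j)$, since $D$ on facility pairs is defined as a minimum over witnessing clients rather than being a genuine metric. I would double-check that the version of the triangle inequality quoted in the introduction is exactly strong enough to pass from a witness $y'$ of the pair $\{x_i, x_\ell\}$ to the client $y_j$; the chain $D(x_i, y_j) + D(y_j, x_\ell)$ versus $D(x_i, x_\ell)$ needs the inequality applied with the roles set so that $D(x_\ell, y') + D(y', x_i) + D(x_i, y_j) \ge D(x_\ell, y_j)$, and then taking the minimum over $y'$ on the left. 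One should also verify there is no degenerate case where no facility of $F_{MP}$ lies near $y_j$ at all — but that possibility is already absorbed into Lemma \ref{lemma:charge_far}, whose statement does not presuppose the existence of such a facility.
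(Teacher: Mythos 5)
Your proof is correct and follows essentially the same route as the paper's: fix the optimal server $x_i \in F$ for $y_j$, invoke Lemma \ref{lemma:select_nbr} to obtain a nearby $F_{MP}$-facility with smaller radius, and split on whether $y_j$ lies in that facility's ball, using Lemmas \ref{lemma:charge_near}, \ref{lemma:charge_far}, and \ref{lemma:charge_any} exactly as the paper does. Your extra care in justifying the step $D(x_\ell,y_j) \leq D(x_i,x_\ell) + D(x_i,y_j)$ via the bipartite triangle inequality and a witnessing client is a valid elaboration of a step the paper leaves implicit.
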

\begin{proof}
Let $x_i \in F$ be such that $D(x_i,y_j) = D(F,y_j)$. By Lemma
\ref{lemma:select_nbr}, there is a facility $x_{i'} \in F_{MP}$ such that
$\varphi^{-1}(i') \leq \varphi^{-1}(i)$ ($r_{i'} \leq r_i$) and
$D(x_i,x_{i'}) \leq 2 r_i$.

If $y_j \in B(x_{i'},r_{i'})$, then by Lemma \ref{lemma:charge_near} we have
$charge(y_j,F_{MP}) = r_{i'} \leq r_i$; thus, by Lemma \ref{lemma:charge_any},\\
$charge(y_j,F_{MP}) \leq charge(y_j,F)$.

If $y_j \notin B(x_{i'},r_{i'})$, then by Lemma \ref{lemma:charge_far} we have
$charge(y_j,F_{MP}) \leq D(x_{i'},y_j) \leq D(x_{i'},x_i) + D(x_i,y_j) \leq
2 r_i + D(x_i,y_j)$. Now, by Lemma \ref{lemma:charge_any}, we see that
$2 r_i + D(x_i,y_j) \leq 3 \cdot \max \{r_i, D(x_i,y_j)\} \leq 3 \cdot charge(y_j,F)$.
\end{proof}

\noindent
The following theorem follows from Lemma \ref{lemma:sum_charge} and Lemma \ref{lemma:compare_charge}.

\begin{theorem}
For any subset $F$ of facilities, $FacLoc(F_{MP}) \leq 3 \cdot FacLoc(F)$.
\label{theorem:MPcost}
\end{theorem}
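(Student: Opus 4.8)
The plan is to derive the theorem as an immediate consequence of the two preceding lemmas, using the charging scheme as the bridge between the two sides of the inequality. The key observation is that Lemma~\ref{lemma:sum_charge} lets us rewrite \emph{both} $FacLoc(F_{MP})$ and $FacLoc(F)$ as sums of per-client charges, and Lemma~\ref{lemma:compare_charge} controls these charges client-by-client. So the entire argument is a one-line summation once those two facts are in hand.

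Concretely, I would first invoke Lemma~\ref{lemma:sum_charge} with the subset $F_{MP}$ to write
\[
FacLoc(F_{MP}) = \sum_{y_j \in \mathcal{C}} charge(y_j, F_{MP}).
\]
Next I would apply Lemma~\ref{lemma:compare_charge} term-by-term: for each client $y_j$ we have $charge(y_j, F_{MP}) \leq 3 \cdot charge(y_j, F)$, and summing this inequality over all $y_j \in \mathcal{C}$ gives
\[
\sum_{y_j \in \mathcal{C}} charge(y_j, F_{MP}) \leq 3 \sum_{y_j \in \mathcal{C}} charge(y_j, F).
\]
Finally I would apply Lemma~\ref{lemma:sum_charge} a second time, now to the subset $F$, to identify the right-hand side with $3 \cdot FacLoc(F)$. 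Chaining these three steps yields $FacLoc(F_{MP}) \leq 3 \cdot FacLoc(F)$.

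There is essentially no obstacle here: all the real work has been front-loaded into Lemmas~\ref{lemma:select_nbr}--\ref{lemma:compare_charge}, where the metric/triangle-inequality reasoning and the case analysis on whether $y_j$ lies inside a ball $B(x_{i'}, r_{i'})$ take place. The only thing to be careful about is that Lemma~\ref{lemma:compare_charge} holds for an \emph{arbitrary} subset $F$ (in particular for an optimal solution), so the resulting bound is uniform in $F$; I would make sure the statement is quantified accordingly, after which taking $F$ to be an optimal configuration immediately gives the claimed $3$-approximation of the optimal facility-opening-plus-connection cost.
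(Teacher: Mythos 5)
Your proposal is correct and matches the paper's argument exactly: the paper derives Theorem~\ref{theorem:MPcost} directly from Lemma~\ref{lemma:sum_charge} and Lemma~\ref{lemma:compare_charge} by summing the per-client charge inequality over all clients and identifying both sides via the charging identity. No gaps; this is the intended one-line chaining.
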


\noindent
Now, as mentioned previously, we define $\overline{r}_j$ (for
$y_j \in \mathcal{C}$) as
$\overline{r}_j = \min_{1 \leq i \leq n_f} \{r_i + D(x_i,y_j)\}$.

\begin{lemma}
$FacLoc(F) \geq (\sum_{j=1}^{n_c} \overline{r}_j) / 6$ for any subset
$F \subseteq \mathcal{F}$.
\label{lemma:cost_bound}
\end{lemma}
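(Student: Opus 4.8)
The plan is to sandwich $\sum_{j} \overline{r}_j$ between a constant multiple of $FacLoc(F_{MP})$ and then transfer the bound to an arbitrary $F$ using Theorem~\ref{theorem:MPcost}. Concretely, I would (i) prove the per-client inequality $charge(y_j,F_{MP}) \geq \overline{r}_j/2$, (ii) sum over clients and apply Lemma~\ref{lemma:sum_charge} to get $FacLoc(F_{MP}) \geq \tfrac12\sum_{j=1}^{n_c}\overline{r}_j$, and (iii) divide by $3$ via Theorem~\ref{theorem:MPcost}.

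For step (i), fix a client $y_j$ and let $x_i \in F_{MP}$ be a facility realizing $D(x_i,y_j) = D(F_{MP},y_j)$, i.e.\ the nearest open facility to $y_j$ in the Mettu--Plaxton solution. Applying Lemma~\ref{lemma:charge_any} with $F = F_{MP}$ (whose hypothesis $D(x_i,y_j) = D(F,y_j)$ holds by the choice of $x_i$) gives $charge(y_j,F_{MP}) \geq \max\{r_i, D(x_i,y_j)\}$. On the other hand, since $\overline{r}_j$ is defined as a minimum over all facilities, $\overline{r}_j \leq r_i + D(x_i,y_j) \leq 2\max\{r_i, D(x_i,y_j)\}$. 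Combining the two displays yields $charge(y_j,F_{MP}) \geq \overline{r}_j/2$.

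For steps (ii)--(iii): summing the per-client bound over all $y_j \in \mathcal{C}$ and invoking Lemma~\ref{lemma:sum_charge}, we obtain $FacLoc(F_{MP}) = \sum_{y_j \in \mathcal{C}} charge(y_j,F_{MP}) \geq \tfrac12 \sum_{j=1}^{n_c} \overline{r}_j$. Now take an arbitrary $F \subseteq \mathcal{F}$. Theorem~\ref{theorem:MPcost} gives $FacLoc(F_{MP}) \leq 3\cdot FacLoc(F)$, hence $FacLoc(F) \geq \tfrac13 FacLoc(F_{MP}) \geq \tfrac16 \sum_{j=1}^{n_c}\overline{r}_j$, which is the claimed bound.

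I do not anticipate a genuine obstacle here: every ingredient (Lemma~\ref{lemma:charge_any}, Lemma~\ref{lemma:sum_charge}, Theorem~\ref{theorem:MPcost}) is already established, and the argument is essentially bookkeeping. The only points requiring care are making sure Lemma~\ref{lemma:charge_any} is applied to $F_{MP}$ itself with $x_i$ chosen as the \emph{nearest} open facility (so the minimum-distance hypothesis is met), and tracking the two constants that multiply to produce the final $6$ — the factor $2$ from $r_i + D(x_i,y_j) \leq 2\max\{r_i,D(x_i,y_j)\}$ together with the factor $3$ from Theorem~\ref{theorem:MPcost}.
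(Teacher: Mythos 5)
Your proposal is correct and follows essentially the same route as the paper: the paper also lower-bounds $charge(y_j,F_{MP})$ by $\max\{r_{\delta(j)}, D(x_{\delta(j)},y_j)\}$ for the nearest open facility $x_{\delta(j)}$ (deriving this inline by dropping terms, which is exactly the content of Lemma~\ref{lemma:charge_any}), then uses $\overline{r}_j \leq 2\max\{r_{\delta(j)}, D(x_{\delta(j)},y_j)\}$, sums via Lemma~\ref{lemma:sum_charge}, and divides by $3$ using Theorem~\ref{theorem:MPcost}. The only cosmetic difference is that you cite Lemma~\ref{lemma:charge_any} explicitly where the paper re-derives it.
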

\begin{proof}
Recall that $F_{MP}$ has the property that no two facilities
$x_i, x_j \in F_{MP}$ can have $D(x_i,x_j) \leq r_i + r_j$. Therefore, for a client $y_j$, if
$x_{\delta(j)}$ denotes a closest open facility (i.e. an open facility
satisfying $D(x_{\delta(j),y_j}) = D(F_{MP},y_j)$), then
\begin{align*}
FacLoc(F_{MP}) &= \sum\limits_{j=1}^{n_c} charge(y_j,F_{MP})\\
&\geq \sum\limits_{y_j \in \mathcal{C}} \left[D(x_{\delta(j)},y_j) + %
\max \{0, r_{\delta(j)} - D(x_{\delta(j)},y_j)\}\right]\\
&= \sum\limits_{y_j \in \mathcal{C}} \max \{r_{\delta(j)}, D(x_{\delta(j)},y_j)\}\\
\end{align*}
Note that the inequality in the above calculation follows from throwing away
some terms of the sum in the definition of $charge(y_j,F_{MP})$.

By the definition of $\overline{r}_j$,
$\overline{r}_j \leq r_{\delta(j)} + D(x_{\delta(j)},y_j) \leq
2 \cdot \max \{r_{\delta(j)}, D(x_{\delta(j)},y_j)\}$. It follows that
$$FacLoc(F_{MP}) \geq \sum\limits_{y_j \in \mathcal{C}} \frac{\overline{r}_j}{2}
= \frac{1}{2} \cdot \sum\limits_{j=1}^{n_c} \overline{r}_j.$$
Therefore
$FacLoc(F) \geq FacLoc(F_{MP}) / 3 \geq (\sum_{j=1}^{n_c} \overline{r}_j) / 6$,
for any $F \subseteq \mathcal{F}$.
\end{proof}

\subsubsection{Approximation Analysis - Upper Bound}

Let $F^*$ be the set of facilities opened by Algorithm \ref{alg:bipartite_facloc}.
We analyze $FacLoc(F^*)$ by bounding $charge(y_j,F^*)$ for each client
$y_j$. Recall that $FacLoc(F^*) = \sum_{j=1}^{n_c} charge(y_j,F^*)$. Since
$charge(y_j,F^*)$ is the sum of two terms, $D(F^*,y_j)$ and
$\sum_{x_i \in F^*} \max \{0, r_i - D(x_i,y_j)\}$, bounding each separately by a
$O(s)$-multiple of $\overline{r}_j$, yields the result.

The $s$-ruling set $T_k \subseteq V_k$ has the property that for any node
$x_i \in V_k$, $D(x_i,T_k) \leq 2 \cdot 3^{k+1} r_0 \cdot s$, where $s$ is the
sparsity parameter used to procedure \textsc{RulingSet}(). Also, for no two
members of $T_k$ is the distance between them less than $2 \cdot 3^k r_0$. Note
that here we are using distances from the extension of $D$ to
$\mathcal{F} \times \mathcal{F}$.

Now, in our cost analysis, we consider a facility $x_i \in V_k$. To bound
$D(x_i,F^*)$, observe that either $x_i \in T_k$, or else there exists a facility
$x_{i'} \in T_k$ such that
$D(x_i,x_{i'}) \leq 2 \cdot 3^{k+1} r_0 \cdot s \leq 6 r_i \cdot s$. Also, if a
facility $x_i \in T_k$ does not open, then there exists another node $x_{i'}$ in
a class $V_{k'}$, with $k' < k$, such that $D(x_i,x_{i'}) \leq 2 r_j$.

We are now ready to bound the components of $charge(y_j,F^*)$.

\begin{lemma}
$D(F^*,y_j) \leq (15 s + 15) \cdot \overline{r}_j$
\label{lemma:open_fac_dist}
\end{lemma}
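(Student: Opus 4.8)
The plan is to bound $D(F^*, y_j)$ by starting from an arbitrary facility $x_i$ that nearly realizes $\overline{r}_j$ and then chasing it through the ruling-set and opening rules until we reach an actual open facility, accumulating distance via the triangle inequality at each step. First I would fix $x_i$ to be the facility achieving $\overline{r}_j = r_i + D(x_i, y_j)$, and note this immediately gives both $r_i \le \overline{r}_j$ and $D(x_i, y_j) \le \overline{r}_j$. Say $x_i \in V_k$, so $3^k r_0 \le r_i < 3^{k+1} r_0$. By the ruling-set property quoted just above the statement, there is a facility $x_{i'} \in T_k$ with $D(x_i, x_{i'}) \le 2 \cdot 3^{k+1} r_0 \cdot s \le 6 r_i \cdot s \le 6s \cdot \overline{r}_j$ (the case $x_i \in T_k$ itself is the subcase $i' = i$, which only helps).

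Next I would handle whether $x_{i'}$ actually opens. If it does, then $D(F^*, y_j) \le D(x_{i'}, y_j) \le D(x_{i'}, x_i) + D(x_i, y_j) \le 6s \cdot \overline{r}_j + \overline{r}_j$, which is comfortably within the claimed bound. If $x_{i'}$ does \emph{not} open, then by the opening rule (Step 6 together with Step 5) there must be a witnessing client forcing the failure; concretely there is a facility $x_{i''}$ in some class $V_{k''}$ with $k'' < k$ and $D(x_{i'}, x_{i''}) \le 2 r_{i'}$. Since $x_{i'} \in V_k$ we have $r_{i'} < 3^{k+1} r_0 \le 3 r_i \le 3 \overline{r}_j$, so $D(x_{i'}, x_{i''}) \le 6 \overline{r}_j$. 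Now I would recurse on $x_{i''}$: it lies in a strictly smaller class, so after finitely many steps we reach a facility in the bottom-most relevant class whose ruling-set representative must open (there is nothing below it to block it). The distances telescope, but I must be careful that the geometric growth $3^{k}$ across levels does not blow up — the key point is that each "blocking" step moves us to a strictly smaller $k''$, and the distances involved at level $k''$ are governed by $r_{i''} < 3^{k''+1} r_0 \le 3^{k+1} r_0 \le 9 r_i$, i.e. everything stays bounded by a constant multiple of $r_i \le \overline{r}_j$ regardless of how many levels we descend, because we are always bounding by the \emph{largest} (original) radius, not summing a growing geometric series.

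The main obstacle I anticipate is the bookkeeping in this descent: making sure the triangle-inequality hops are legal (in particular that I am applying the bipartite triangle inequality $D(x_a, x_b) \le D(x_a, y) + D(y, x_b)$ and the facility-facility extension consistently), and getting the constant right — I expect the final constant $15s + 15$ to come from a short chain such as $D(F^*, y_j) \le D(x_{i'}, x_{i''}^{\text{final}}) + D(x_{i'}, x_i) + D(x_i, y_j)$ type bounds, where one carefully argues $D(x_i, x_{i'}^{\text{final}}) \le \big(\sum \text{hops}\big)$ collapses because each hop after the first is bounded by $6\overline{r}_j$ but there is effectively only one "blocking" hop that matters for the final open facility (the one whose representative actually opens), plus the representative-to-$x_i$ hop $\le 6s\,\overline{r}_j$, plus the $x_i$-to-$y_j$ hop $\le \overline{r}_j$. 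A clean way to organize this is: prove by induction on $k$ that every $x_i \in V_k$ has $D(x_i, F^*) \le c(s) \cdot r_i$ for the appropriate constant $c(s)$, then combine with $D(x_i, y_j) \le \overline{r}_j$ and $r_i \le \overline{r}_j$. I would then verify the arithmetic gives $c(s) \le 15s + 14$ so that adding the final $D(x_i,y_j) \le \overline{r}_j$ term lands at $(15s+15)\overline{r}_j$.
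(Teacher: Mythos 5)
Your overall strategy---start at the minimizer $x_{j'}$ of $r_x + D(x,y_j)$, hop to its ruling-set representative at cost $6sr$, and, if that representative is blocked, descend to a lower class at cost $6r$ and repeat until an open facility is reached---is exactly the paper's argument, and your handling of the first two hops and of the base case (every member of $T_0$ opens) is correct. The gap is in how you total up the chain. You assert that the sum of hops ``collapses because \ldots there is effectively only one blocking hop that matters,'' but that is false: the triangle inequality forces you to pay for every intermediate hop, and the chain can be up to $k'+1$ links long. Bounding each hop by a fixed multiple of the original radius (your ``$\le 9 r_i$'' observation) therefore only yields $D(F^*,x_{j'}) = O(k' \cdot s \cdot r_{j'})$, which is not $O(s\,\overline{r}_j)$. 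The fact that saves the argument---and that your proposal never uses---is that the radii along the chain decrease \emph{geometrically}: the $t$-th facility in the descent lies in a class of index at most $k'-t$, so for $t \ge 1$ its radius is at most $3^{1-t} r_{j'}$, and the hop costs sum to $(6s+6)\left(r_{j'}+r_{j'}+\tfrac{1}{3}r_{j'}+\tfrac{1}{9}r_{j'}+\cdots\right) = (6s+6)\cdot\tfrac{5}{2}\,r_{j'} = (15s+15)\,r_{j'}$. That is precisely where the constant comes from.

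Your proposed ``clean'' fix, an induction on $k$ with hypothesis $D(x,F^*) \le c(s)\,r_x$ for $x \in V_k$, does not close either: one descent step gives $D(x,F^*) \le (6s+6)r_x + D(x'',F^*)$ with $x''$ in a strictly lower class but only $r_{x''} \le r_x$ guaranteed (the first descent need not shrink the radius by a factor of $3$; only the later ones do), so the recursion would demand $c(s) \ge (6s+6) + c(s)$, which is unsatisfiable. You would need to strengthen the hypothesis, e.g.\ to $D(x,F^*) \le (6s+6)r_x + d\cdot 3^k r_0$ for $x\in V_k$, which closes with $d = 9s+9$ and recovers $(15s+15)r_x$; or simply unroll the chain and sum the geometric series as the paper does. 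The final combination is then cleanest as $D(F^*,y_j) \le (15s+15)r_{j'} + D(x_{j'},y_j) \le (15s+15)(r_{j'}+D(x_{j'},y_j)) = (15s+15)\overline{r}_j$, which avoids your need for the slightly stronger target $c(s)\le 15s+14$.
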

\begin{proof}
First, consider any facility $x_i$. Suppose that the class containing $x_i$ is
$V_k$. Observe that the result of procedure \textsc{RulingSet} is that $x_i$ is
within distance $6 r_i \cdot s$ of a facility $x_{i'} \in T_k$ (which may be
$x_i$ itself). Now, in Algorithm \ref{alg:bipartite_facloc}, $x_{i'}$ either
opens, or there exists a facility $x_{i''}$ of a lower class such that
$D(x_{i'},x_{i''}) \leq 2 r_{i'} \leq 6 r_i$. We therefore conclude that within
a distance $(6 s + 6) \cdot r_i$ of $x_i$, there exists either an open facility
or a facility of a class of index less than $k$.

Now, let $x_{j'}$ be a minimizer for $r_x + D(x,y_j)$ so that
$\overline{r}_j = r_{j'} + D(x_{j'},y_j)$, and suppose $x_{j'} \in V_{k'}$. By
the preceding analysis, there exists within a distance $(6 s + 6) \cdot r_{j'}$
of $x_{j'}$ either an open facility or a facility $x_{j''}$ of a class of index
$k'' < k'$. If it is the latter, then within a distance
$(6 s + 6) \cdot r_{j''}$ of $x_{j''}$ there exists either an open facility or a
facility $x_{j'''}$ of a class of index $k''' \leq k' - 2$.

Repeating this argument up to $k' + 1$ times reveals that there must exist an
open facility within a distance
$(6 s + 6) \cdot (r_{j'} + r_{j''} + r_{j'''} + \ldots)$ of $x_{j'}$. (Note that
any facility $x_i$ in class $V_0$ has an open facility within distance
$6 s \cdot r_i$ because every member of $T_0$ opens.) We can simplify this
distance bound by noting that $r_{j'} > r_{j''}$, $r_{j'} > 3 r_{j'''}$,
$r_{j'} > 9 r_{j''''}$, etc., and so $D(F^*,x_{j'}) \leq
(6 s + 6) \cdot (r_{j'} + r_{j'} + \frac{1}{3} r_{j'} + \frac{1}{9} r_{j'} + \ldots) =
(6 s + 6) \cdot \frac{5}{2} r_{j'} = (15 s + 15) r_{j'}$.

Thus we have
\begin{align*}
D(F^*,y_j) &\leq D(F^*,x_{j'}) + D(x_{j'},y_j)\\[1mm]
&\leq (15 s + 15) \cdot r_{j'} + D(x_{j'},y_j)\\[1mm]
&\leq (15 s + 15) \cdot (r_{j'} + D(x_{j'},y_j)\\[1mm]
&= (15 s + 15) \cdot \overline{r}_j\\
\end{align*}
which completes the proof.
\end{proof}

\begin{lemma}
$\sum_{x_i \in F^*} \max \{0, r_i - D(x_i,y_j\} \leq 3 \cdot \overline{r}_j$
\label{lemma:open_fac_contrib}
\end{lemma}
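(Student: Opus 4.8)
The plan is to show that the sum $\sum_{x_i \in F^*} \max\{0, r_i - D(x_i,y_j)\}$ has at most one nonzero term, and then to bound that single term by $3\overline{r}_j$. The first claim follows from the sparsity of $F^*$. If two distinct facilities $x_i, x_{i'} \in F^*$ both satisfied $D(x_i,y_j) < r_i$ and $D(x_{i'},y_j) < r_{i'}$, then by the triangle inequality $D(x_i,x_{i'}) \le D(x_i,y_j) + D(x_{i'},y_j) < r_i + r_{i'} \le 2\max\{r_i,r_{i'}\}$. But $F^*$ inherits a separation property analogous to that of $F_{MP}$ in Lemma \ref{lemma:MP_sparse}: within each class $V_k$, the ruling set $T_k$ has pairwise distances exceeding $2\cdot 3^k r_0$, and the inter-class opening rule in Step 5(ii) (a facility $x_i \in T_k$ opens only if no witnessed facility $x_{i'}$ in a lower class has $D(x_i,x_{i'}) \le 2r_i$) ensures that open facilities in different classes are also sufficiently far apart. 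So I would first record, as a preliminary observation, that any two open facilities $x_i, x_{i'} \in F^*$ satisfy $D(x_i,x_{i'}) > 2\max\{r_i, r_{i'}\}$ — I expect this to be stated or easily derivable from the discussion immediately preceding Lemma \ref{lemma:open_fac_dist} — which contradicts the displayed strict inequality. Hence at most one $x_i \in F^*$ lies in $B(x_i,r_i)$ with respect to $y_j$, i.e.\ at most one term in the sum is positive.

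Next, suppose $x_i \in F^*$ is the unique facility with $r_i > D(x_i,y_j)$ (if there is none, the sum is $0 \le 3\overline{r}_j$ and we are done). Then the sum equals $r_i - D(x_i,y_j) \le r_i$. It therefore suffices to show $r_i \le 3\overline{r}_j$. Let $x_{j'}$ be the minimizer defining $\overline{r}_j = r_{j'} + D(x_{j'},y_j)$. If $r_i \le r_{j'}$ we are immediately done since $r_{j'} \le \overline{r}_j$. Otherwise $r_{j'} < r_i$; I want to derive a contradiction or a bound from the fact that $x_i$ is close to $y_j$ while $x_{j'}$ has a smaller radius. Using the triangle inequality, $D(x_i, x_{j'}) \le D(x_i,y_j) + D(x_{j'},y_j) < r_i + D(x_{j'},y_j) \le r_i + \overline{r}_j$. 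The idea is that $x_{j'}$, having radius $r_{j'} < r_i$, sits in a class $V_{k'}$ with $k' \le k$ (where $x_i \in V_k$); if $k' < k$ strictly, then $x_{j'}$ (or a witnessed facility near it) is a lower-class facility within distance $2r_i$ of $x_i$, which would have prevented $x_i$ from opening — giving the contradiction, provided $D(x_i,x_{j'}) \le 2r_i$, i.e.\ provided $D(x_{j'},y_j) \le 2r_i - D(x_i,y_j)$, which holds when $\overline{r}_j \le 2r_i - D(x_i,y_j) < 2r_i$; some care with the class-index arithmetic (the factor-of-3 gaps between successive $r_0\cdot 3^k$ thresholds) is needed here. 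When $k' = k$, both $x_i$ and $x_{j'}$ have radii within a factor of $3$, and one argues directly that $D(x_i,y_j) < r_i$ together with $\overline{r}_j = r_{j'} + D(x_{j'},y_j) \ge r_{j'} > r_i/3$ already gives $r_i < 3\overline{r}_j$.

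The main obstacle I anticipate is pinning down exactly the separation guarantee on $F^*$ across classes and handling the interaction between the characteristic-radius thresholds $3^k r_0$ and the distance bounds: the clean within-class statement ($D > 2\cdot 3^k r_0$ for members of $T_k$) must be combined with the Step 5(ii) opening condition to produce a single uniform statement "$D(x_i,x_{i'}) > 2\max\{r_i,r_{i'}\}$ for all $x_i,x_{i'}\in F^*$," and any slack in constants there propagates into the final bound. If the constant that comes out is slightly worse than $3$, I would either tighten the class-separation bookkeeping or absorb it — but since the lemma as stated claims exactly $3\overline{r}_j$, I expect the intended argument is the short one: at most one positive term, that term is $r_i - D(x_i,y_j) \le r_i$, and $r_i \le 3\overline{r}_j$ because $x_{j'}$ cannot have radius more than a factor $3$ smaller than $r_i$ without blocking $x_i$ from opening (the $2r_i$ threshold in Step 5(ii), the $\le 6r_i$ slack already exploited in Lemma \ref{lemma:open_fac_dist}, and $D(x_i,y_j) < r_i$ combine to force $x_{j'}$ into class $V_k$ or $V_{k-1}$, whence $r_{j'} \ge r_i/3$ and $\overline{r}_j \ge r_{j'} \ge r_i/3$).
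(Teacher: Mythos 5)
Your proposal takes essentially the same route as the paper: show via the within-class non-adjacency of $T_k$ and the Step~5(ii) inter-class blocking rule that at most one term of the sum is positive, then bound that term using the fact that the minimizer $x_{j'}$ cannot be both much smaller in radius and close enough to $y_j$ without having blocked $x_i$ from opening. The paper's proof is a slightly more direct contradiction (assume $3\overline{r}_j < r_i - D(x_i,y_j)$, deduce $3r_{j'} < r_i$ and $D(x_i,x_{j'}) \le 2r_i$ at once) and, for the uniqueness step, uses the within-class guarantee $D(x_i,x_{i'}) > r_i + r_{i'}$ rather than your stated $2\max\{r_i,r_{i'}\}$ (which is the stronger, not the weaker, of the two and is not what non-adjacency in $H_k$ gives you); but since your hypothesis yields $D(x_i,x_{i'}) \le r_i + r_{i'}$ directly, the contradiction closes either way and the argument is sound.
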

\begin{proof}
We begin by observing that we cannot simultaneously have $D(x_i,y_j) \leq r_i$
and $D(x_{i'},y_j) \leq r_{i'}$ for $x_i, x_{i'} \in F^*$ and $i \neq {i'}$.
Indeed, if this were the case, then $D(x_i,x_{i'}) \leq r_i + r_{i'}$. If $x_i$
and $x_{i'}$ were in the same class $V_l$, then they would be adjacent in $H$;
this is impossible, for then they could not both be members of $T_l$ (for a node
in $V_l$, membership in $T_l$ is necessary to join $F^*$). If $x_i$ and $x_{i'}$
were in different classes, then assume WLOG that $r_i < r_{i'}$. Then
$D(x_i,x_{i'}) \leq r_i + r_{i'} \leq 2 r_{i'}$, and $x_{i'}$ should not have
opened. These contradictions imply that there is at most one open facility $x_i$
for which $D(x_i,y_j) \leq r_i$.

For the rest of this lemma, then, assume that $x_i \in F^*$ is the unique open
node such that $D(x_i,y_j) \leq r_i$ (if such a $x_i$ does not exist, there is
nothing to prove). Also, let $x_{j'}$ be a minimizer for $r_x + D(x,y_j)$ so
that $\overline{r}_j = r_{j'} + D(x_{j'},y_j)$.

Now, suppose that $3 \cdot \overline{r}_j < r_i - D(x_i,y_j)$. Then
$3 \cdot r_{j'} + 3 \cdot D(x_{j'},y_j) < r_i - D(x_i,y_j)$, and so we can
conclude that (i) $3 r_{j'} < r_i$ (and $x_{j'}$ is in a lower class than $x_i$)
and (ii) $D(x_i,x_{j'}) \leq D(x_i,y_j) + D(x_{j'},y_j) \leq r_i + r_i = 2 r_i$,
which implies that $x_i$ should not have opened. This is a contradiction, and so
therefore it must be that $r_i - D(x_i,y_j) \leq 3 \overline{r}_j$. Since $x_i$
is unique (if it exists), this completes the proof.
\end{proof}

\noindent 
We are now ready to present the final result of this section.
\begin{theorem}
Algorithm \ref{alg:bipartite_facloc} (\textsc{LocateFacilities}) computes an
$O(s)$-factor approximation to
\textsc{BipartiteFacLoc} in $O(\mathcal{T}(n,s))$ rounds, where
$\mathcal{T}(n,s)$ is the running time of procedure \textsc{RulingSet}($H, s$),
called an $n$-node graph $H$.
\label{theorem:cost_approx}
\end{theorem}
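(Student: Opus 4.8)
The plan is to assemble the already-established pieces into the two claimed bounds — the $O(s)$ approximation factor and the $O(\mathcal{T}(n,s))$ running time — so the proof is essentially bookkeeping. For the approximation factor, I would start from the identity $FacLoc(F^*) = \sum_{y_j \in \mathcal{C}} charge(y_j, F^*)$ (Lemma~\ref{lemma:sum_charge}), which splits $charge(y_j,F^*)$ into the connection term $D(F^*,y_j)$ and the ``slack'' term $\sum_{x_i \in F^*} \max\{0, r_i - D(x_i,y_j)\}$. Lemma~\ref{lemma:open_fac_dist} bounds the first term by $(15s+15)\cdot\overline{r}_j$ and Lemma~\ref{lemma:open_fac_contrib} bounds the second by $3\cdot\overline{r}_j$, so summing over all clients gives $FacLoc(F^*) \le (15s + 18)\cdot \sum_{j=1}^{n_c}\overline{r}_j$. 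Then I would invoke Lemma~\ref{lemma:cost_bound}, which says $\sum_{j=1}^{n_c}\overline{r}_j \le 6\cdot OPT$, to conclude $FacLoc(F^*) \le (90s + 108)\cdot OPT = O(s)\cdot OPT$. That handles the correctness claim; a remark that $s = O(1)$ (in fact $s=2$) yields an $O(1)$-approximation.

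For the running time, I would observe — as already noted in the text just before the algorithm listing — that every step of \textsc{LocateFacilities} other than the call to \textsc{RulingSet}$(\cup_k H_k, s)$ in Step~4 runs in $O(1)$ communication rounds. Concretely: Step~1 is a single broadcast from each facility of its radius $r_i$ (one $O(\log n)$-bit message per edge), so $O(1)$ rounds; Steps~2 and 3 are purely local computation at the clients once the $r_i$ values are known; Steps~5 and 6 involve each client sending at most one \texttt{open} message per facility and each facility aggregating across its clients, again $O(1)$ rounds; and Step~7 is one broadcast of open-status followed by local selection of the nearest open facility. Hence the total round complexity is $O(1) + \mathcal{T}(n, s) = O(\mathcal{T}(n,s))$, where $\mathcal{T}(n,s)$ is the round complexity of \textsc{RulingSet} on the $n$-node graph $\cup_k H_k$.

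One subtlety I would make sure to address carefully is that \textsc{RulingSet} is invoked on the disjoint union $\cup_k H_k$ rather than on a single connected graph, and that the guarantee it must provide is a $2$-ruling set in each component $H_k$ simultaneously, with the membership known to every client at the end. This is exactly what Step~4 of the algorithm promises, and the cost-analysis lemmas (in particular the distance bounds $D(x_i, T_k) \le 2\cdot 3^{k+1} r_0 \cdot s$ and the $2\cdot 3^k r_0$ separation within $T_k$ used in Lemmas~\ref{lemma:open_fac_dist} and~\ref{lemma:open_fac_contrib}) already presuppose precisely this form of output. So I would state that the theorem holds \emph{given} an implementation of \textsc{RulingSet} meeting this specification, and forward-reference Section~\ref{section:2RulingSet} for such an implementation with $\mathcal{T}(n,2) = O((\log\log n_f)^2 \cdot \log\log\min\{n_f,n_c\})$.

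The main obstacle is not really in this proof — it is almost entirely assembly of prior lemmas — but rather in being precise about what ``runs on the complete bipartite network in the $\mathcal{CONGEST}$ model'' means for each step, since the adjacencies of $H_k$ are witnessed only by clients and never explicitly materialized at the facilities. I would take care to note that Steps~1--3 and 5--7 never require a facility to know its $H_k$-neighborhood: the graph $\cup_k H_k$ is needed only inside \textsc{RulingSet}, and all post-ruling-set decisions (Step~5's condition~(ii), Step~6's opening rule) are made by clients using only their own distances. With that clarification in place, the round-count bookkeeping goes through cleanly and the theorem follows.
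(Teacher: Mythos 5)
Your proposal is correct and follows essentially the same route as the paper's proof: decompose $FacLoc(F^*)$ via the charge identity, apply Lemmas~\ref{lemma:open_fac_dist} and~\ref{lemma:open_fac_contrib} to bound the two terms by $(15s+15)\overline{r}_j$ and $3\overline{r}_j$ respectively, sum over clients, invoke Lemma~\ref{lemma:cost_bound} for the factor of $6$, and note that all steps other than the \textsc{RulingSet} call take $O(1)$ rounds. Your additional remarks on the per-step round accounting and on the precise specification required of \textsc{RulingSet} are consistent with (and slightly more explicit than) the paper's treatment.
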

\begin{proof}
Combining Lemma \ref{lemma:open_fac_dist} and Lemma \ref{lemma:open_fac_contrib}
gives
\begin{align*}
FacLoc(F^*) &= \sum\limits_{j=1}^{n_c} charge(F^*,y_j)\\
&= \sum\limits_{j=1}^{n_c} %
\left[D(F^*,y_j) + \sum_{x_i \in F^*} \max \{0, r_i - D(x_i,y_j)\}\right]\\
&\leq \sum\limits_{j=1}^{n_c} %
\left[(15 s + 15) \cdot \overline{r}_j + 3 \overline{r}_j\right]\\
&\leq (15 s + 18) \cdot \sum\limits_{j=1}^{n_c} \overline{r}_j\\
&\leq 6 \cdot (15 s + 18) \cdot OPT.
\end{align*}
The last inequality follows from the lower bound established in Lemma \ref{lemma:cost_bound}.

Also, noting that all the steps in Algorithm \ref{alg:bipartite_facloc}, except
the one that calls \textsc{RulingSet}($\cup_k H_k, s$) take a total of $O(1)$
communication rounds, we obtain the theorem.
\end{proof}

\noindent
\textbf{Note on the size of the constant.} The above analysis yields the approximation factor
$90s + 108$, which amounts to 288, since we describe how to compute a 2-ruling set.
This is obviously huge, but we have made no attempt to optimize it.
A small improvement in the size of this constant can be obtained by using multiplier $1 + \frac{1}{\sqrt{2}}$
instead of 3 in the definition of the classes $V_0, V_1, \ldots$ (see \cite{BHP12arxiv}).
For improved exposition, we use the multiplier 3.

\section{Dissemination on a Bipartite Network}
\label{section:Dissemination}

In the previous section we reduced \textsc{BipartiteFacLoc} to the problem of
computing an $s$-ruling set on a graph $H = \cup_k H_k$ defined on
facilities. Our technique for finding an $s$-ruling set involves selecting a set
$M$ of facilities at random, disseminating the induced subgraph $H[M]$ to every
client and then having each client locally compute an MIS of $H[M]$ (details
appear in Section \ref{section:2RulingSet}). A key subroutine needed to implement
this technique is one that can disseminate $H[M]$ to every client efficiently,
provided the number of edges in $H[M]$ is at most $n_f$. In Section
\ref{section:Introduction} we abstracted this problem as the
\textbf{Message Dissemination with Duplicates} (MDD) problem. In this section,
we present a randomized algorithm for MDD that runs in expected
$O(\log\log \min \{n_f, n_c\})$ communication rounds.

Recall that the difficulty in disseminating $H[M]$ is the fact that the
adjacencies in this graph are witnessed only by clients, with each adjacency
being witnessed by at least one client. However, an adjacency can be witnessed
by many clients and a client is unaware of who else has knowledge of any
particular edge. Thus, even if $H[M]$ has at most $n_f$ edges, the total number
of adjacency observations by the clients could be as large as $n_f^2$. Below we
use iterative probabilistic hashing to rapidly reduce the number of
``duplicate'' witnesses to adjacencies in $H[M]$. Once the total number of
distinct adjacency observations falls to $48 n_f$, it takes only a constant
number of additional communication rounds for the algorithm to finish
disseminating $H[M]$.
The constant ``48'' falls out easily from our analysis (Lemma \ref{lemma:number_steps}, in particular)
and we have made no attempt to optimize it in any way.

\subsection{Algorithm}

The algorithm proceeds in iterations and in each iteration a hash function is
chosen at random for hashing messages held by clients onto facilities. Denote
the universe of possible adjacency messages by $\mathcal{U}$. Since messages
represent adjacencies among facilities, $|\mathcal{U}| = \binom{n_f}{2}$.
However, it is convenient for $|\mathcal{U}|$ to be equal to $n_f^2$ and so we
extend $\mathcal{U}$ by dummy messages so that this is the case. We now define a
family $\mathcal{H}_{\mathcal{U}}$ of hash functions from $\mathcal{U}$ to
$\{1, 2, \ldots, n_f\}$ and show how to pick a function from this family,
uniformly at random. To define $\mathcal{H}_{\mathcal{U}}$, fix an ordering
$m_1, m_2, m_3, \ldots$ of the messages of $\mathcal{U}$. Partition
$\mathcal{U}$ into groups of size $n_f$, with messages
$m_1, m_2, \ldots, m_{n_f}$ as the first group, the next $n_f$ elements as the
second group, and so on. The family $\mathcal{H}_{\mathcal{U}}$ is obtained by
independently mapping each group of messages onto $(1, 2, \ldots, n_f)$ via a
cyclic permutation. For each group of $n_f$ messages in $\mathcal{U}$, there are
precisely $n_f$ such cyclic maps for it, and so a map in
$\mathcal{H}_{\mathcal{U}}$ can be selected uniformly at random by having each
facility choose a random integer in $\{1, 2, \ldots, n_f\}$ and broadcast this
choice to all clients (in the first round of an iteration). Each client then
interprets the integer received from facility $x_i$ as the image of message
$m_{(i - 1) \cdot n_f + 1}$.

In round 2, each client chooses a destination facility for each adjacency
message in its possession (note that no client possesses more than $n_f$
messages), based on the hash function chosen in round 1. For a message $m$ in
the possession of client $y_j$, $y_j$ computes the hash $h(m)$ and marks $m$ for
delivery to facility $x_{h(m)}$. In the event that more than one of $y_j$'s
messages are intended for the same recipient, $y_j$ chooses one uniformly at
random for \textit{correct} delivery, and marks the other such messages as
``leftovers.'' During the communication phase of round 2, then, client $y_j$
delivers as many messages as possible to their correct destinations; leftover
messages are delivered uniformly at random over unused communication links to
other facilities.

\begin{algo}
\textbf{Input:} A complete bipartite graph $G$, with partition $(\mathcal{F}, \mathcal{C})$;
an overlay network $H$ on $\mathcal{F}$ with $|E[H]| \leq n_f$\\
\textbf{Assumption:} For each adjacency $e'$ in $H$,
\textit{one or more} clients has knowledge of $e'$\\
\textbf{Output:} Each client should know the entire contents of $E[H]$
{\small
\begin{tabbing}
......\=a..\=b..\=c..\=d..\=e..\=f..\=g..\=h..\=i..\=j..\=k..\=l\kill
1.\>\textbf{while} $true$ \textbf{do}\\
\>\>\textbf{\emph{Start of Iteration:}}\\
2.\>\>Each client $y_j$ sends the number of distinct messages currently held,
$n_j$, to facility $x_1$.\\
3.\>\>\textbf{if} $\sum_{j=1}^{n_c} n_j \leq 48 n_f$ \textbf{then}\\
4.\>\>\>Facility $x_1$ broadcasts a \textit{break} message to each client.\\
5.\>\>\>Client $y_1$, upon receiving a \textit{break} message, broadcasts a
\textit{break} message to each facility.\\
\>\>\textbf{end-if-then}\\
6.\>\>Each facility $x_i$ broadcasts an integer in $\{1, \ldots, n_f\}$ chosen
uniformly at random; this collection of\\
\>\>\>broadcasts determines a map $h \in \mathcal{H}_{\mathcal{U}}$.\\
7.\>\>For each adjacency message $m'$ currently held, client $y_j$ maps $m'$ to
$x_{h(m')}$.\\
8.\>\>For each $i \in \{1, \ldots, n_f\}$, if
$|\{m' \text{ held by } y_j \; : \; h(m') = i\}| > 1$, client $y_j$ chooses one
message to send\\
\>\>\>to $x_i$ at random from this set and marks the others as
\textit{leftovers}.\\
9.\>\>Each client $y_j$ sends the messages chosen in Lines 7-8 to their
destinations; \textit{leftover} messages are\\
\>\>\>delivered to other facilities (for whom $y_j$ has no intended message)
in an arbitrary manner\\
\>\>\>(such that $y_j$ sends at most one message to each facility).\\
10.\>\>Each facility $x_i$ receives a collection of at most $n_c$ facility
adjacency messages; if duplicate messages\\
\>\>\>are received, $x_i$ discards all but one of them so that the messages
held by $x_i$ are distinct.\\
11.\>\>Each facility $x_i$ sends its number of distinct messages currently held,
$b_i$, to client $y_1$.\\
12.\>\>Client $y_1$ responds to each facility $x_i$ with an index
$c(i) = (\sum_{k=1}^{i-1} b_k \mod n_c)$.\\
13.\>\>Each facility $x_i$ distributes its current messages evenly to the
clients in the set\\
\>\>\>$\{y_{c(i)+1}, y_{c(i)+2}, \ldots, y_{c(i)+b_i}\}$ (where indexes are
reduced modulo $n_c$ as necessary).\\
14.\>\>Each client $y_j$ receives at most $n_f$ messages; the numbers of
messages received by any two clients\\
\>\>\>differ by at most one.\\
15.\>\>Each client discards any duplicate messages held.\\
\>\>\textbf{\emph{End of Iteration:}}\\
16.\>At this point, at most $48 n_f$ total messages remain among the $n_c$
clients; these messages may be\\
\>\>distributed evenly to the facilities in $O(1)$ communication rounds.\\
17.\>The $n_f$ facilities can now broadcast the (at most) $2 n_f$ messages to
all clients in $O(1)$ rounds.
\end{tabbing}}
\caption{\textsc{DisseminateAdjacencies}}
\label{alg:DisseminateAdjacencies}
\end{algo}

In round 3, a facility has received a collection of up to $n_c$ messages, some
of which may be duplicates of each other. After throwing away all but one copy
of any duplicates received, each facility announces to client $y_1$ the number
of (distinct) messages it has remaining. In round 4, client $y_1$ has received
from each facility its number of distinct messages, and computes for each an
index (modulo $n_c$) that allows facilities to coordinate their message
transfers in the next round. Client $y_1$ transmits the indices back to the
respective facilities in round 5.

In round 6, facilities transfer their messages back across the bipartition to
the clients, beginning at their determined index (received from client $y_1$)
and working modulo $n_c$. This guarantees that the numbers of messages received
by two clients $y_j$, $y_{j'}$ in this round can differ by no more than one.
(Although it is possible that some of these messages will ``collapse'' as
duplicates.) Clients now possess subsets of the original $n_f$ messages, and the
next iteration can begin.

\subsection{Analysis}

Algorithm \ref{alg:DisseminateAdjacencies} is proved correct by observing that
(i) the algorithm terminates only when dissemination has been completed; and
(ii) for a particular message $m'$, in any iteration, there is a nonzero
probability that all clients holding a copy of $m'$ will deliver $m'$ correctly,
after which there will never be more than one copy of $m'$ (until all messages
are broadcast to all clients at the end of the algorithm). The running time
analysis of Algorithm \ref{alg:DisseminateAdjacencies} starts with two lemmas
that follow from our choice of the probabilistic hash function.

\begin{lemma}
Suppose that, at the beginning of an iteration, client $y_j$ possesses a
collection $S_j$ of messages, with $|S_j| = n_j$. Let $E_{i,j}$ be the event
that at least one message in $S_j$ hashes to facility $x_i$. Then the
probability of $E_{i,j}$ (conditioned on all previous iterations) is bounded
below by $1 - e^{-n_j / n_f}$.
\label{lemma:bound_hash}
\end{lemma}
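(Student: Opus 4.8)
The plan is to bound the probability that none of the $n_j$ messages in $S_j$ hashes to facility $x_i$, and then take the complement. First I would observe that, by the construction of $\mathcal{H}_{\mathcal{U}}$, each message $m$ is hashed by an independent uniformly random cyclic permutation applied to the group of $n_f$ messages containing $m$. In particular, for a \emph{single} fixed message $m$, the image $h(m)$ is uniformly distributed over $\{1,\dots,n_f\}$, so $\Pr[h(m) = i] = 1/n_f$ and $\Pr[h(m) \neq i] = 1 - 1/n_f$.

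The next step is to handle the $n_j$ messages in $S_j$ jointly. The key point is that the events $\{h(m) \neq i\}$ for distinct $m \in S_j$ are \emph{independent} whenever the messages lie in distinct groups (different groups get independent cyclic permutations), and when several messages lie in the same group they are hashed by the \emph{same} cyclic permutation, which only makes their images more spread out — i.e. negatively associated — so the probability that they \emph{all} avoid $i$ is no larger than in the independent case. Either way, I would argue $\Pr[\neg E_{i,j}] \le \left(1 - \frac{1}{n_f}\right)^{n_j}$. (If one wants to avoid the negative-association argument, note that within a single group a cyclic shift sends at most one message to slot $i$, so the monochromatic-group case actually helps; a short case analysis covers it.) Then the standard inequality $1 - x \le e^{-x}$ gives $\left(1 - \frac{1}{n_f}\right)^{n_j} \le e^{-n_j/n_f}$, hence $\Pr[E_{i,j}] \ge 1 - e^{-n_j/n_f}$, which is the claim. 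Conditioning on all previous iterations is harmless because the hash function in the current iteration is chosen with fresh independent randomness (each facility broadcasts a fresh uniform integer in Line 6), so the conditional distribution of $h$ is still uniform over $\mathcal{H}_{\mathcal{U}}$.

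The main obstacle I anticipate is making the dependence argument for messages sharing a group fully rigorous: within one group the $n_f$ messages are sent to $n_f$ slots by a random cyclic permutation, so their images are \emph{not} independent, and one must justify that this dependence does not increase $\Pr[\neg E_{i,j}]$ beyond the product bound. The cleanest route is the observation that a random cyclic permutation is a bijection, so at most one message per group can land in slot $i$; combined with independence across groups, the probability that slot $i$ is missed by every message of $S_j$ is a product over the groups touched by $S_j$, and in each such group the miss-probability is either $1 - 1/n_f$ (if $S_j$ has one message there) or, for $k \ge 2$ messages in that group, exactly $1 - k/n_f \le (1-1/n_f)^k$. Multiplying these bounds over all groups yields $\Pr[\neg E_{i,j}] \le (1-1/n_f)^{n_j} \le e^{-n_j/n_f}$, completing the proof. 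Everything else is routine.
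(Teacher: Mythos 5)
Your proposal is correct and follows essentially the same route as the paper: the rigorous version in your final paragraph (exact miss-probability $1 - |R_{j,k}|/n_f$ per group via the bijectivity of the cyclic shift, the bound $1 - kx \le (1-x)^k$, and independence across groups) is precisely the paper's argument. The preliminary ``negative association'' framing is unnecessary once you make that exact per-group computation, but it does no harm.
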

\begin{proof}
Let $R_{j,k}$ be the intersection of $S_j$ with the $k$th group of
$\mathcal{U}$, so that we have the partition
$S_j = R_{j,1} + R_{j,2} + \ldots + R_{j,n_f}$. Let $E_{i,j,k}$ be the event
that some message in $R_{j,k}$ hashes to $x_i$, so that
$E_{i,j} = \cup_{k=1}^{n_f} E_{i,j,k}$. Due to the nature of
$\mathcal{H}_{\mathcal{U}}$ as maps on $R_{j,k}$ (a uniformly distributed
collection of cyclic injections), the probability of $E_{i,j,k}$,
$\mathbf{P}(E_{i,j,k})$, is precisely $|R_{j,k}| / n_f$. Therefore, the
probability of the complement of $E_{i,j,k}$,\\[1mm]
$\mathbf{P}(\overline{E_{i,j,k}})$, is $1 - |R_{j,k}| / n_f$. Using the
inequality $1 - n x \leq (1 - x)^n$ (for $x \in [0, 1]$), we can bound
$\mathbf{P}(\overline{E_{i,j,k}})$\\[1mm]
above by $(1 - \frac{1}{n_f})^{|R_{j,k}|}$.

Next, since the actions of $h$ (from $\mathcal{H}_{\mathcal{U}}$) on each
$R_{j,k}$ are chosen independently, the events $\{E_{i,j,k}\}_{k=1}^{n_f}$ are
(mutually) independent. Therefore, we have
\begin{align*}
\mathbf{P}\left(\overline{E_{i,j}}\right) &= %
\mathbf{P}\left(\overline{\cup_{k=1}^{n_f} E_{i,j,k}}\right)\\[1mm]
&= \mathbf{P}\left(\bigcap\limits_{k=1}^{n_f} \overline{E_{i,j,k}}\right)\\
&= \prod\limits_{k=1}^{n_f} \mathbf{P}\left(\overline{E_{i,j,k}}\right)\\
&\leq \prod\limits_{k=1}^{n_f} \left(1 - \frac{1}{n_f}\right)^{|R_{j,k}|}\\[1mm]
&= \left(1 - \frac{1}{n_f}\right)^{|R_{j,1}| + |R_{j,2}| + \ldots + |R_{j,n_f}|}\\[1mm]
&= \left(1 - \frac{1}{n_f}\right)^{n_j}\\
\end{align*}
Using the inequality $1 + x \leq e^x$ (for all $x$), we can then bound
$\mathbf{P}(\overline{E_{i,j}})$ above by
$(e^{-\frac{1}{n_f}})^{n_j} = e^{-\frac{n_j}{n_f}}$. Thus we have
$\mathbf{P}(E_{i,j}) \geq 1 - e^{-\frac{n_j}{n_f}}$.
\end{proof}

\begin{lemma}
Suppose that, at the beginning of an iteration, client $y_j$ possesses a
collection $S_j$ of messages, with $|S_j| = n_j$. Let $M_j \subseteq S_j$ be the
subset of messages that are correctly delivered by client $y_j$ in the present
iteration. Then the expected value of $|M_j|$ (conditioned on previous
iterations) is bounded below by $n_j - \frac{n_j^2}{2 n_f}$.
\label{lemma:expect_deliver}
\end{lemma}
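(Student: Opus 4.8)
The plan is to express $|M_j|$ as a sum of indicator random variables, one per destination facility, and then apply linearity of expectation together with Lemma~\ref{lemma:bound_hash}. Specifically, for each facility $x_i$, let $Z_{i,j}$ be the indicator of the event $E_{i,j}$ that at least one message of $S_j$ hashes to $x_i$. Whenever $E_{i,j}$ occurs, client $y_j$ selects exactly one message destined for $x_i$ and delivers it correctly (Line~8 of Algorithm~\ref{alg:DisseminateAdjacencies}); whenever $E_{i,j}$ fails, no message is correctly delivered to $x_i$. Hence $|M_j| = \sum_{i=1}^{n_f} Z_{i,j}$ exactly, and so $\mathbf{E}[|M_j|] = \sum_{i=1}^{n_f} \mathbf{P}(E_{i,j}) \geq n_f \cdot (1 - e^{-n_j/n_f})$ by Lemma~\ref{lemma:bound_hash}.

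It then remains to show the purely analytic inequality $n_f(1 - e^{-n_j/n_f}) \geq n_j - \frac{n_j^2}{2 n_f}$. Writing $t = n_j / n_f \in [0,1]$, this is equivalent to $1 - e^{-t} \geq t - t^2/2$, which follows from the standard Taylor-with-remainder bound: $e^{-t} \leq 1 - t + t^2/2$ for $t \geq 0$ (the alternating series $e^{-t} = 1 - t + t^2/2 - t^3/6 + \cdots$ is bounded above by any partial sum ending in a positive term, for $t$ in the relevant range; more carefully, one checks that $g(t) := 1 - t + t^2/2 - e^{-t}$ satisfies $g(0)=0$ and $g'(t) = -1 + t + e^{-t} \geq 0$ since $e^{-t} \geq 1 - t$). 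Rearranging gives $1 - e^{-t} \geq t - t^2/2$, and multiplying through by $n_f$ yields the claim.

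The main subtlety — and the only place one must be a little careful — is the claim that $|M_j| = \sum_i Z_{i,j}$ exactly rather than merely $\geq$. This requires noting that the events $E_{i,j}$ are over distinct destination facilities, so the messages correctly delivered to different facilities are genuinely different messages (a single message has a single hash value), and therefore the count of correctly delivered messages is exactly the number of facilities that receive at least one hashed message from $y_j$. Leftover messages (Line~9) are delivered to facilities for which $y_j$ had no intended message and so never count toward $M_j$; they do not affect this tally. Once this identity is in hand, the rest is linearity of expectation plus the elementary inequality above, and I do not expect any further obstacle.
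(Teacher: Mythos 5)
Your proposal is correct and follows essentially the same route as the paper: both decompose $|M_j|$ as a sum over destination facilities of the indicators of the events $E_{i,j}$, invoke Lemma~\ref{lemma:bound_hash} with linearity of expectation, and then establish $1 - e^{-t} \geq t - t^2/2$ (the paper via the alternating-series bound, you via a derivative argument, which is an immaterial difference). Your explicit justification that the identity $|M_j| = \sum_i Z_{i,j}$ is exact is a welcome bit of extra care but introduces nothing new.
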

\begin{proof}
Let $M_{j,i} \subseteq M_j$ be the subset of messages correctly delivered by
$y_j$ to $x_i$ (in the present iteration), so that we have the partition
$M_j = M_{j,1} + M_{j,2} + \ldots + M_{j,n_f}$. Observe that $|M_{j,i}|$ is $1$
if at least one message in $S_j$ hashes to $x_i$, and $0$ otherwise. Therefore
$|M_{j,i}|$ is equal to $1_{E_{i,j}}$, the indicator random variable for event
$E_{i,j}$, and so the expected value of $|M_{j,i}|$, $\mathbf{P}(|M_{j,i}|)$, is
equal to $\mathbf{P}(E_{i,j})$. By linearity of expectation, we have
$$\mathbf{E}(|M_j|) = \sum\limits_{i=1}^{n_f} \mathbf{E}(|M_{j,i}|)
= \sum\limits_{i=1}^{n_f} \mathbf{P}(E_{i,j})
\geq n_f \cdot \left(1 - e^{-\frac{n_j}{n_f}}\right)$$
Now, the function $1 - e^{-x}$ has the (alternating) Taylor series
$$1 - e^{-x} = 1 - \left(1 - x + \frac{x^2}{2} - \frac{x^3}{6} + \ldots\right) = x - \frac{x^2}{2} + \frac{x^3}{6} - \ldots$$
Since $0 \leq n_j \leq n_f$, with $x = n_j / n_f$ we have an alternating series
with terms of decreasing magnitude; thus we have the lower bound
$$\mathbf{E}(|M_j|) \geq n_f \cdot \left(1 - e^{-\frac{n_j}{n_f}}\right)
\geq n_f \cdot \left(\frac{n_j}{n_f} - \frac{n_j^2}{2 n_f^2}\right)
= n_j - \frac{n_j^2}{2 n_f}.$$
\end{proof}

\noindent By Lemma \ref{lemma:expect_deliver}, the number of incorrectly
delivered messages in $S_j$ is bounded above (in expectation) by
$\frac{n_j^2}{2n_f}$. Informally speaking, this implies that the sequence
$n_f, \frac{n_f}{2}, \frac{n_f}{2^3}, \frac{n_f}{2^7}, \ldots$ bounds from above
the number of incorrectly delivered messages (in expectation) in each iteration.
This doubly-exponential rate of decrease in the number of undelivered messages
leads to the expected-doubly-logarithmic running time of the algorithm.

We now step out of the context of a single client and consider the progress of the
algorithm on the whole. Using Lemma \ref{lemma:expect_deliver}, we derive the
following recurrence for the expected total number of messages held by all
clients at the beginning of each iteration.

\begin{lemma}
Suppose that the algorithm is at the beginning of iteration $I$, $I \geq 2$, and
let $T_I$ be the total number of messages held by all clients
(i.e. $T_I = \sum_{j=1}^{n_c} n_j(I)$, where $n_j(I)$ is the number of messages
held by client $y_j$ at the beginning of iteration $I$). Then the conditional
expectation of $T_{I+1}$ given $T_I$, $\mathbf{E}(T_{I+1} \mid T_I)$, satisfies
\begin{displaymath}
\mathbf{E}(T_{I+1} \mid T_I) \leq
\begin{cases}
n_f + \frac{(T_I + n_c)^2}{2 n_f \cdot n_c} & \text{ if } T_I > n_c\\[1mm]
n_f + \frac{T_I}{2 n_f} & \text{ if } T_I \leq n_c\\
\end{cases}
\end{displaymath}
\label{lemma:expect_progress}
\end{lemma}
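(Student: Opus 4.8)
The plan is to track message-instances through a single iteration, obtain a deterministic upper bound on $T_{I+1}$ in terms of the quantities $b_i$ (the number of distinct messages held by facility $x_i$ after Line~10), and then take conditional expectations using Lemma~\ref{lemma:expect_deliver}. First, since the redistribution of Lines~13--14 puts exactly $\sum_{i=1}^{n_f} b_i$ message-instances into the hands of the clients and the deduplication of Line~15 only removes messages, we have the deterministic inequality $T_{I+1} \le \sum_{i=1}^{n_f} b_i$.

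Next I would bound $\sum_i b_i$. For each facility $x_i$, partition the distinct messages it holds after Line~10 into those received as \emph{correct} deliveries (messages $m$ with $h(m) = i$) and those received as \emph{leftovers}; these two sets are disjoint, because a client that marks a copy of $m$ as a leftover already has an intended (correct) message destined for $x_{h(m)}$ and therefore never forwards that leftover to $x_{h(m)}$. Since all copies of a message $m$ carry the same hash value $h(m)$, every correctly-delivered copy of $m$ arrives at the single facility $x_{h(m)}$; hence, summed over all facilities, the correctly-delivered messages contribute at most one unit per distinct message present in the system, i.e.\ at most $|E[H]| \le n_f$ in total. Each client $y_j$ holding $n_j := n_j(I)$ messages and correctly delivering $|M_j|$ of them forwards exactly $n_j - |M_j|$ leftovers, so counting leftovers with multiplicity gives $\sum_i b_i \le n_f + \sum_{j=1}^{n_c} (n_j - |M_j|)$. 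Taking expectations conditioned on the state at the start of iteration~$I$ and invoking Lemma~\ref{lemma:expect_deliver} (which bounds $\mathbf{E}(n_j - |M_j|)$ by $n_j^2/(2 n_f)$), then averaging over the states consistent with a given value of $T_I$, yields
\[
\mathbf{E}(T_{I+1} \mid T_I) \;\le\; n_f + \frac{1}{2 n_f}\,\mathbf{E}\!\left[\, \sum_{j} n_j^2 \mid T_I\,\right].
\]

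Finally I would bound $\sum_j n_j^2$ given $\sum_j n_j = T_I$; this is the step I expect to require the most care. The key observation is that, because $I \ge 2$, iteration $I-1$ ended with the even redistribution of Lines~13--14, so the numbers of messages held by the clients at the start of iteration~$I$ differ by at most one, and hence $n_j \le \lceil T_I / n_c \rceil$ for every $j$. If $T_I > n_c$, then $n_j \le (T_I + n_c)/n_c$, so $\sum_j n_j^2 \le (\max_j n_j)\sum_j n_j \le \frac{T_I + n_c}{n_c}\cdot T_I \le \frac{(T_I + n_c)^2}{n_c}$; if $T_I \le n_c$, then $\lceil T_I/n_c \rceil \le 1$, so every $n_j \in \{0,1\}$ and $\sum_j n_j^2 = \sum_j n_j = T_I$. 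Substituting these two bounds into the displayed inequality gives the two cases of the lemma. The delicate point is the interaction between the even redistribution of Line~13 and the deduplication of Line~15 that follows it: one must check that deduplication cannot leave any single client holding more than $\lceil T_I/n_c\rceil$ distinct messages (or argue that it only helps the bound), and this bookkeeping is the real content of the argument.
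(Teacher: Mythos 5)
Your proof follows essentially the same route as the paper's: bound $T_{I+1}$ by $n_f$ plus the number of incorrectly delivered (leftover) messages, apply Lemma~\ref{lemma:expect_deliver} to get the $\sum_j n_j^2/(2 n_f)$ term, and then use the near-even spread of messages at the start of iteration $I$ to split into the two cases $T_I > n_c$ and $T_I \leq n_c$. The one point you flag as delicate --- whether the deduplication of Line~15, performed after the even redistribution, can leave some client with more than $T_I/n_c + 1$ messages --- is precisely the point the paper passes over by simply asserting that the $T_I$ messages are evenly spread at the beginning of iteration $I$, so your accounting is, if anything, the more candid of the two.
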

\begin{proof}
Since $I \geq 2$, at the beginning of iteration $I$, the $T_I$ messages are
evenly spread among all clients; the numbers $n_j$, $n_{j'}$ of messages held by
two distinct clients $y_j$, $y_{j'}$ differ by no more than $1$. Therefore,
$n_j \leq T_I / n_c + 1$ for all $j$. As well, if $T_I \leq n_c$, then $n_j = 1$
for $T_I$ values of $j$, and $0$ otherwise.

The number of messages remaining after iteration $I$ is bounded above by the
number of messages not correctly delivered during iteration $I$, plus $n_f$
(for each collection of identical messages that ``collapse'' at a given facility
$x_i$, one such message is kept and passed back to some client). Therefore,
$T_{I+1} \leq n_f + \sum_{j=1}^{n_c} |S_j \setminus M_j|$ (where $S_j$ is as
defined in Lemma \ref{lemma:bound_hash}, and $M_j$ as in Lemma
\ref{lemma:expect_deliver}). We then have
\begin{align*}
\mathbf{E}(T_{I+1} \mid T_I) &\leq %
n_f + \sum\limits_{j=1}^{n_c} \mathbf{E}(|S_j \setminus M_j| \mid T_I)\\
&\leq n_f + \sum\limits_{j=1}^{n_c} \frac{n_j^2}{2 n_f}\\
&\leq n_f + \sum\limits_{j=1}^{n_c} \frac{(\frac{T_I}{n_c} + 1)^2}{2 n_f}\\
&= n_f + n_c \cdot \frac{(\frac{T_I}{n_c} + 1)^2}{2 n_f}\\
&= n_f + \frac{(T_I + n_c)^2}{2 n_f \cdot n_c}\\
\end{align*}
If $T_I \leq n_c$, we get also that
\begin{align*}
\mathbf{E}(T_{I+1} \mid T_I) &\leq %
n_f + \sum\limits_{j=1}^{n_c} \frac{n_j^2}{2 n_f}\\
&= n_f + \frac{T_I}{2 n_f}
\end{align*}
\end{proof}

\noindent
We now define a sequence of variables $t_i$ (via the recurrence below) that bounds from above the expected behavior of the
sequence of $T_I$'s established in the previous lemma.
Let $t_1 = n_f \cdot \min \{n_f, n_c\}$, $t_i = \frac{1}{2} t_{i-1}$
for $2 \leq i \leq 5$, and for $i > 5$, define $t_i$ by
\begin{displaymath}
t_i =
\begin{cases}
2 n_f + \frac{(t_{i-1} + n_c)^2}{n_f \cdot n_c} & \text{ if } t_{i-1} > n_c\\[1mm]
2 n_f + \frac{t_{i-1}}{n_f} & \text{ if } t_{i-1} \leq n_c\\
\end{cases}
\end{displaymath}
The following lemma establishes that the $t_i$'s fall rapidly.

\begin{lemma}
The smallest index $i$ for which $t_i \leq 48 n_f$ is
at most $\log \log \min \{n_f, n_c\} + 2$.
\label{lemma:number_steps}
\end{lemma}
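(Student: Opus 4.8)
The plan is to follow the sequence $t_i$ through its two regimes and show that, past index~$5$, it collapses doubly-exponentially fast until it drops below $48 n_f$. The first regime is trivial: $t_2,\dots,t_5$ just halve, so $t_5=t_1/16=n_f\min\{n_f,n_c\}/16$. Write $m:=\min\{n_f,n_c\}$ and normalize by setting $\rho_i:=t_i/(n_f n_c)$; the target ``$t_i\le 48n_f$'' becomes ``$\rho_i\le 48/n_c$'', and we enter the quadratic regime from $\rho_5=m/(16 n_c)\le 1/16$.

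The core of the argument is the ``large'' branch, in force while $t_{i-1}>n_c$, i.e.\ while $\rho_{i-1}>1/n_f$. Dividing the recurrence through by $n_f n_c$ gives $\rho_i=\frac{2}{n_c}+(\rho_{i-1}+\frac1{n_f})^2$, and since $\rho_{i-1}>1/n_f$ the quadratic term lies between $\rho_{i-1}^2$ and $4\rho_{i-1}^2$. As long as $t_{i-1}$ sits comfortably above $n_c$ (more precisely, above roughly $n_f\sqrt{n_c}$), the additive $2/n_c$ is swamped, and a suitably scaled variable $\nu_i:=c\rho_i$ obeys a pure squaring recurrence $\nu_i\le\nu_{i-1}^2$ with $\nu_5<1$. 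Unrolling gives $t_{5+k}\le n_f n_c\,\nu_5^{2^k}/c$, so $t_{5+k}$ drops below $48 n_f$ as soon as $2^k$ exceeds a quantity of order $\log n_c$ — that is, after $\log\log n_c+O(1)$ large-branch steps. This keeps $i$ at $\log\log m+O(1)$: if $n_c\le n_f$ then $n_c=m$; and if $n_f<n_c$, the large branch is entered only when $t_5=n_f^2/16>n_c$, which forces $\log n_c=O(\log n_f)$ and hence $\log\log n_c=O(\log\log m)$.

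It remains to handle the ``small'' branch $t_{i-1}\le n_c$ — entered when the squaring drives $t$ below $n_c$, or when $t_5\le n_c$ from the start. There $t_i=2n_f+t_{i-1}/n_f$, which divides $t$ by essentially $n_f$ per step and reaches $\le 3n_f$ within $O(\log_{n_f}n_c)$ steps, and immediately once $t_{i-1}\le 46 n_f^2$. A short case split closes the count: whenever $\log_{n_f}n_c$ fails to be $O(\log\log n_c)$, $n_f$ is so small that $t_1=n_f\min\{n_f,n_c\}$ is polynomial in $n_f$, and the halving phase alone then drives $t_i$ below $48 n_f$ within $O(\log n_f)=O(\log\log m)$ steps. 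Finally, tracking the explicit constants — the factor $16=2^4$ from the four halvings against the threshold $48$ — sharpens the additive $O(1)$ to the claimed $+2$.

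The subtle point is precisely this constant bookkeeping at the seam between the two branches: the clean ``$t_i$ squares relative to $n_f n_c$'' statement holds only while $t_{i-1}$ is well above $n_c$, so near the boundary the additive $2n_f$ term — and the slack in $(\rho_{i-1}+1/n_f)^2\le 4\rho_{i-1}^2$ — must be carried through explicitly. A tidy way to present it is to define an integer potential for the large branch, e.g.\ $\Phi_i:=\lceil\log_2\log_2( n_f n_c/(8 t_i))\rceil$, show it drops by at least $1$ each step, and absorb the handful of small-branch steps into the final constant; alternatively, one can unroll the recurrence directly into a bound of the shape $t_{5+k}\le 2 n_f+n_f n_c\,(m/(16 n_c))^{2^k}$ and solve $t_{5+k}\le 48 n_f$ for~$k$.
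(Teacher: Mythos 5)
Your core argument is essentially the paper's own: normalize $t_i$, show that in the large branch the normalized quantity obeys a pure squaring recurrence starting below $1$, unroll to get doubly-exponential decay so that $t_i$ drops below $48 n_f$ after $\log \log n_c + O(1)$ post-halving steps, and finally convert $\log \log n_c$ to $\log \log \min\{n_f, n_c\}$ by observing that the large branch can only persist when $n_c < n_f^2/16$. (The paper rescales by $n_f$ rather than by $n_f n_c$, working with $t_i' = 4t_i/n_f$ and $t_i'' = 3t_i'$ to reach $t_{i+1}'' \leq (t_i'')^2/n_c$ with $t_5'' \leq \frac{3}{4}n_c$; that is the same computation as your $\nu_i \leq \nu_{i-1}^2$ with $\nu_5 \leq 1/2$.) Your treatment of the seam is also sound in spirit: if the large branch holds but $t_{i-1} \leq n_f\sqrt{2n_c}$, then $n_c < 2n_f^2$ and one more step gives $t_i \leq 2n_f + 8n_f \leq 48 n_f$, so the squaring never needs to be pushed past the point where the additive term matters.

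The one step that is wrong as written is your closing case split for the small branch. First, the multi-step scenario it addresses never occurs: to need more than one small-branch step you would have to enter that branch with $n_c \geq t_{i-1} > 46 n_f^2$, but the branch is first entered either at $i = 6$ with $t_5 = n_f \min\{n_f,n_c\}/16 \leq n_f^2/16$, or from a large-branch step, which requires $t_5 > n_c$ and hence $t_{i-1} \leq n_c < n_f^2/16$; either way a single step yields $t_i \leq 2n_f + n_f/16 \leq 48 n_f$. Second, the repair you offer for it is false on its face: the halving phase is by definition only the four steps $i = 2,\ldots,5$ and cannot be prolonged, and $O(\log n_f)$ is not $O(\log\log\min\{n_f,n_c\})$. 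Delete that sentence and replace it with the one-line observation above. Finally, you assert that the additive constant can be sharpened to $+2$ but never carry this out; in fairness, the paper's own proof also only delivers an index of $5 + (2 + \log\log n_c)$, and only the $O(\log\log\min\{n_f,n_c\})$ form of the bound is used downstream, so this looseness is shared with the source.
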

\begin{proof}
Equivalently, we concern ourselves with $t_{i}' = \frac{4 t_i}{n_f}$ and
determine the number of rounds required before this quantity becomes bounded
above by 192.

If $n_c \le 48$ or $n_f \le 48$, then $t_{1}' = 4 \min \{n_f, n_c\} \le 192$ and we are done.
So we assume that both $n_c$ and $n_f$ are greater than 48.
Now rewrite the recursion for $t_i$ ($i > 5$)
as
\begin{displaymath}
t_i \leq
\begin{cases}
2 n_f + \frac{(2 t_{i-1})^2}{n_f \cdot n_c} & \text{ if } t_{i-1} > n_c\\[1mm]
2 n_f + \frac{t_{i-1}}{n_f} & \text{ if } t_{i-1} \leq n_c\\
\end{cases}
\end{displaymath}
Correspondingly, write the recurrence for $t_i'$ as follows.
\begin{displaymath}
t_{i}' \leq
\begin{cases}
8 + \frac{(t_{i-1}')^2}{n_c} & \text{ if } t_{i-1}' > \frac{4 n_c}{n_f}\\[1mm]
8 + \frac{t_{i-1}'}{n_f} & \text{ if } t_{i-1}' \leq \frac{4 n_c}{n_f}\\
\end{cases}
\end{displaymath}
We now prove the following claim by induction: for each $i = 5, 6, \ldots$,
$t_i' \le \min\{4n_f, n_c/4\}$.
The base case concerns $t_5'$.
Since $t_5' = \frac{1}{4} \min\{n_f, n_c\}$, the claim is clearly true for $i = 5$.
Assuming that the claim is true for $t_i'$, we now consider $t_{i+1}'$.
First note that since $t_i' \le \min\{4n_f, n_c/4\}$,
it follows that $(t_i')^2 \le n_c \cdot n_f$ and $(t_i')^2 \le n_c^2/16\}$.
Now we consider the two possible cases of the recursion to get a bound on $t_{i+1}'$.
\begin{enumerate}
\item If $t_i' > 4n_c/n_f$, then $t_{i+1}' \le 8 + \frac{n_c n_f}{n_c} = 8 + n_f$.
Since $n_f \ge 48$, we have that $8 + n_f \le 4 n_f$ and therefore $t_{i+1}' \le 4 n_f$.
Similarly, if $t_i' > 4n_c/n_f$, we also have that $t_{i+1}' \le 8 + \frac{n_c}{16}$.
Again, since $n_c \ge 48$, it follows that $8 + \frac{n_c}{16} \le \frac{n_c}{4}$.
Hence, $t_{i+1}' \le n_c/4$.

\item If $t_i' \le 4n_c/n_f$, then $t_{i+1}' \le 8 + 4n_f/n_f = 12$.
Since both $n_f$ and $n_c$ are greater than 48, $12 < \min\{4n_f, n_c/4\}$.
Hence, $t_{i+1}' < \min\{4n_f, n_c/4\}$.
\end{enumerate}

To finish the proof, we now consider two cases.
\begin{description}
\item[Case 1: $n_c \ge n_f^2$.] 
In this case, $4n_c/n_f \ge 4n_f$.
According to the inductive claim proved above, $t_i' \le 4n_f$ for all $i = 5, 6, \ldots$.
Therefore, $t_5' \le 4n_c/n_f$ and Case 2 of the recurrence applies and yields 
$t_6' \le 8 + 4n_f/n_f = 12$, completing the proof. 

\item[Case 2: $n_c < n_f^2$.] 
For notational convenience, let us use $R$ to denote $c \cdot \log\log \min\{n_f, n_c\}$.
For $i = 5, 6, \ldots, R-1$, we assume that $t_i' > 4n_c/n_f$.
Otherwise, Case 2 of the recurrence applies and $t_{i+1}' \le 8 + 4n_f/n_f = 12$ and we are done.
Also, for $i = 5, 6, \ldots, R-1$, we assume that $(t_i')^2/n_c > 4$.
Otherwise, $t_{i+1}' \le 8 + 4 = 12$ (using Case 1 of the recurrence) and we are done.

Now define $t_{i}'' = 3 t_{i}'$;
we bound $t_{i}''$ above by a sequence that falls at a double-exponential rate. 
Given that $t_i' > 4n_c/n_f$ for $i = 5, 6, \ldots, R-1$,
we see that 
$$3t_{i+1}' \le 24 + \frac{3 (t_i')^2}{n_c}.$$
Furthermore, given that $(t_i')^2/n_c > 4$ for $i = 5, 6, \ldots, R-1$,
we see that
$$3t_{i+1}' \le \frac{3 (t_i')^2}{n_c} + \frac{3 (t_i')^2}{n_c} = \frac{9 (t_i')^2}{n_c} = \frac{(t_i'')^2}{n_c}.$$
%
Thus $t_{i+1}'' \leq (t_{i}'')^2/n_c$. Now, $t_{5}'' \leq \frac{3}{4} n_c$, and
so by induction, $t_{5+j}'' \leq \left(\frac{3}{4}\right)^{2^j} \cdot n_c$. Thus
the smallest $j$ for which $t_{5+j}'' \le 192$ is  at most $2 + \log \log n_c$, which in
this case is also $2 + \log \log \min \{n_f, n_c\}$.
\end{description}
\end{proof}

\begin{lemma}
For $i > 5$, if $T_I \leq t_i$, then the conditional probability (given
iterations $1$ through $I-1$) of the event that $T_{I+1} \leq t_{i+1}$ is
bounded below by $\frac{1}{2}$.
\label{lemma:bound_prob}
\end{lemma}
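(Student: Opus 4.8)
The plan is to combine the one-step expectation bound of Lemma~\ref{lemma:expect_progress} with Markov's inequality. Conditioning on the outcomes of iterations $1$ through $I-1$ fixes the value of $T_I$; call it $\tau$, and we are given $\tau \le t_i$. Lemma~\ref{lemma:expect_progress} gives an explicit upper bound on $\mathbf{E}(T_{I+1}\mid T_I=\tau)$, and the defining recurrence for $t_{i+1}$ is precisely twice that bound evaluated at $\tau = t_i$. So the heart of the argument is to establish
\[
\mathbf{E}(T_{I+1}\mid T_I=\tau)\ \le\ \tfrac12\, t_{i+1},
\]
after which, since $T_{I+1}$ is a nonnegative (integer-valued) random variable and $t_{i+1}>0$, Markov's inequality yields $\mathbf{P}(T_{I+1} > t_{i+1}) \le \mathbf{P}(T_{I+1}\ge t_{i+1}) \le \mathbf{E}(T_{I+1}\mid T_I=\tau)/t_{i+1} \le \tfrac12$, and hence $\mathbf{P}(T_{I+1}\le t_{i+1})\ge\tfrac12$, as claimed. (We use here that $i>5$ guarantees $I\ge2$, so that Lemma~\ref{lemma:expect_progress} applies and, in particular, the $T_I$ messages are spread evenly across the clients.)

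To prove the displayed inequality I would carry out a short case analysis matching the two branches of the $t$-recurrence. If $t_i \le n_c$, then $\tau \le t_i \le n_c$, so the second branch of Lemma~\ref{lemma:expect_progress} applies and, using monotonicity in $\tau$, $\mathbf{E}(T_{I+1}\mid T_I=\tau) \le n_f + \tau/(2n_f) \le n_f + t_i/(2n_f) = \tfrac12\bigl(2n_f + t_i/n_f\bigr) = \tfrac12 t_{i+1}$. If $t_i > n_c$, then $t_{i+1} = 2n_f + (t_i+n_c)^2/(n_f n_c)$, and there are two sub-cases. When $\tau > n_c$, the first branch gives $\mathbf{E}(T_{I+1}\mid T_I=\tau) \le n_f + (\tau+n_c)^2/(2n_f n_c) \le n_f + (t_i+n_c)^2/(2n_f n_c) = \tfrac12 t_{i+1}$, again by monotonicity. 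When $\tau \le n_c$ but $t_i > n_c$, the second branch gives $\mathbf{E}(T_{I+1}\mid T_I=\tau) \le n_f + \tau/(2n_f) \le n_f + n_c/(2n_f)$, and it remains to check that this is at most $\tfrac12 t_{i+1} = n_f + (t_i+n_c)^2/(2n_f n_c)$; this reduces to $n_c^2 \le (t_i+n_c)^2$, which holds since $t_i \ge 0$.

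The only mildly delicate point — the one I expect to be the main obstacle to state cleanly — is this last mixed case, where the actual value $\tau = T_I$ falls in the ``small'' regime ($\tau \le n_c$) even though the index bound $t_i$ lies in the ``large'' regime ($t_i > n_c$). Here one cannot simply substitute $t_i$ into the second branch of Lemma~\ref{lemma:expect_progress}; instead one bounds the second-branch expression for $\tau \le n_c$ and compares it against the first-branch \emph{definition} of $t_{i+1}$, which is exactly where the trivial inequality $n_c \le t_i + n_c$ is invoked. Once that comparison is in place, everything else is routine monotonicity together with a single application of Markov's inequality.
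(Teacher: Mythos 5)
Your proof is correct and takes essentially the same route as the paper's: Lemma~\ref{lemma:expect_progress} gives $\mathbf{E}(T_{I+1}\mid T_I)\le\tfrac12 t_{i+1}$, and a single application of Markov's inequality finishes. The paper asserts that expectation bound in one line without justification; your case analysis (in particular the mixed case $\tau\le n_c<t_i$) merely fills in a detail the paper leaves implicit.
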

\begin{proof}
If $i > 5$ and $T_I \leq t_i$, then by Lemma \ref{lemma:expect_progress},
$\mathbf{E}(T_{I+1}) \leq \frac{1}{2} t_{i+1}$. Therefore, by Markov's
inequality, $\mathbf{P}(T_{I+1} > t_{i+1}) \leq \frac{1}{2}$ and
$\mathbf{P}(T_{I+1} \leq t_{i+1}) \geq \frac{1}{2}$.
\end{proof}

\begin{theorem}
Algorithm \ref{alg:DisseminateAdjacencies} solves the dissemination problem in
$O(\log \log \min \{n_f, n_c\})$ rounds in expectation.
\label{theorem:dissemination_rounds}
\end{theorem}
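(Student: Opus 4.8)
The plan is to combine the deterministic decay of the bounding sequence $\{t_i\}$ (Lemma~\ref{lemma:number_steps}) with the per-iteration ``halving of failure probability'' provided by Markov's inequality (Lemma~\ref{lemma:bound_prob}), and then show that each ``phase'' of the algorithm (corresponding to a fixed index $i$) lasts only $O(1)$ iterations in expectation. First I would set $R := \log\log\min\{n_f,n_c\} + 2$, the index guaranteed by Lemma~\ref{lemma:number_steps} at which $t_R \le 48 n_f$. The goal is to show that the algorithm reaches a state with $T_I \le 48 n_f$ (and hence triggers the \emph{break} in Lines 3--5, followed by the $O(1)$-round cleanup in Lines 16--17) after $O(R)$ iterations in expectation.

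Next I would track, for each $i$ with $5 < i \le R$, the number $N_i$ of iterations the algorithm spends ``at level $i$,'' meaning the number of consecutive iterations during which the invariant $T_I \le t_i$ holds but $T_I \le t_{i+1}$ has not yet been achieved. The key observation is this: once $T_I \le t_i$ holds, Lemma~\ref{lemma:expect_progress} gives $\mathbf{E}(T_{I+1} \mid T_I) \le \frac{1}{2} t_{i+1}$ \emph{regardless of} the exact value of $T_I$ (as long as it is at most $t_i$), so Lemma~\ref{lemma:bound_prob} applies at \emph{every} such iteration, not just the first. Hence each iteration at level $i$ independently succeeds (in moving to level $i+1$) with probability at least $\frac{1}{2}$, so $N_i$ is stochastically dominated by a geometric random variable with success probability $\frac{1}{2}$, giving $\mathbf{E}(N_i) \le 2$. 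Summing over the at most $R - 5$ levels, the expected number of iterations to drive $T_I$ down to $t_R \le 48 n_f$ is at most $2R + O(1) = O(\log\log\min\{n_f,n_c\})$. I also need to handle the first few iterations ($I = 1$ through $5$), where $t_i = \frac{1}{2} t_{i-1}$ deterministically by construction; these contribute only $O(1)$ iterations, and one should check that $T_1 \le t_1 = n_f \cdot \min\{n_f, n_c\}$ holds trivially (each of the $n_c$ clients holds at most $n_f$ messages) and that the early recurrence for $T_I$ is consistent with the $t_i$ bound.

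Finally I would convert ``number of iterations'' into ``number of communication rounds'': each iteration of the \textbf{while} loop in Algorithm~\ref{alg:DisseminateAdjacencies} consists of a constant number of rounds (the hash broadcast, the forward delivery, the facility-to-$y_1$ reporting, the index response, and the return delivery — Lines 2--15 are each $O(1)$ rounds), and the post-loop cleanup (Lines 16--17) is $O(1)$ rounds. So the expected round complexity is a constant times the expected iteration count, namely $O(\log\log\min\{n_f,n_c\})$. To make the stochastic-domination argument fully rigorous I would phrase it as an induction over levels using the tower/smoothing property of conditional expectation: conditioning on the history up to the iteration at which level $i$ is first entered, the subsequent iterations each fail with probability at most $\frac{1}{2}$ by Lemma~\ref{lemma:bound_prob}, so a standard optional-stopping or direct geometric-series computation bounds $\mathbf{E}(N_i \mid \text{level } i \text{ entered}) \le 2$, and then one sums these conditional expectations.

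The main obstacle I anticipate is the careful bookkeeping needed to make the ``levels'' argument airtight: the index $i$ is not a deterministic function of the iteration number, so one must define the level-transition stopping times precisely and verify that the hypothesis ``$T_I \le t_i$'' of Lemma~\ref{lemma:bound_prob} is genuinely maintained as an invariant once established — in particular that a failure (i.e.\ $T_{I+1} > t_{i+1}$) still leaves us at level $i$ with $T_{I+1} \le t_i$, which requires knowing that the recurrence for $t_i$ is monotone enough that $\mathbf{E}(T_{I+1}) \le \frac{1}{2}t_{i+1} \le t_i$ whenever $T_I \le t_i$. This monotonicity, together with the a.s.\ bound $T_{I+1} \le n_f + \sum_j |S_j \setminus M_j|$ used in the proof of Lemma~\ref{lemma:expect_progress}, should close the gap, but it is the delicate point; one may instead prefer to argue more coarsely that the event $\{T_I \le t_i \text{ for all } I \ge \text{(iteration when level } i \text{ starts)}\}$ fails only with small probability and absorb this into the expectation via a union bound over the $O(R)$ levels.
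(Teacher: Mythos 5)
Your proposal is correct and follows essentially the same route as the paper's proof: the paper defines $\tau_i$ as the number of iterations needed for $T_I$ to drop from $t_i$ to $t_{i+1}$, invokes Lemma~\ref{lemma:bound_prob} to dominate each $\tau_i$ by a geometric random variable with mean at most $2$, and sums over the $O(\log\log\min\{n_f,n_c\})$ levels by linearity of expectation --- exactly your $N_i$ argument. Your added care about maintaining the invariant $T_I \le t_i$ across failed iterations and about converting iterations to rounds addresses details the paper leaves implicit, but it is the same proof.
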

\begin{proof}
Let $\tau_i = \min_{\{I \; : \; T_I \leq t_{i+1} \}} \{I\} -
\min_{\{I \; : \; T_I \leq t_i\}} \{I\}$. Conceptually, $\tau_i$ is the number
of rounds necessary for the total number of messages remaining to decrease from
$t_i$ to $t_{i+1}$. Thus, the running time of Algorithm
\ref{alg:DisseminateAdjacencies} is $O(1) +
\sum_{i=1}^{O(\log \log \min\{n_f, n_c\})} \tau_i$. By linearity of
expectation, the expected running time is then
$O(1) + \sum_{i=1}^{O(\log \log \min \{n_f, n_c\})} \mathbf{E}(\tau_i)$. By Lemma
\ref{lemma:bound_prob}, if $T_I \leq t_i$, then regardless of past history,
there is at least a probability-$\frac{1}{2}$ chance that $T_{I+1}$ will be less
than $t_{i+1}$. It follows that $\tau_i$ is dominated by an Exp($\frac{1}{2}$)
(exponential) random variable, and so $\mathbf{E}(\tau_i) \leq 2$. Therefore the
expected running time of Algorithm \ref{alg:DisseminateAdjacencies} is
$O(\log \log \min \{n_f, n_c\})$.
\end{proof}

\section{Computing a 2-Ruling Set of Facilities}
\label{section:2RulingSet}

In this section, we show how to efficiently compute a 2-ruling set on the graph
$H$ (with vertex set $\mathcal{F}$) constructed in Algorithm
\ref{alg:bipartite_facloc} (\textsc{LocateFacilities}). Our algorithm (called \textsc{Facility2RulingSet}
and described as Algorithm \ref{alg:Bipartite2RulingSet}) computes a $2$-ruling
set in $H$ by performing \textit{iterations} of a procedure that combines
randomized and deterministic sparsification steps. In each iteration, each
facility chooses (independently) to join the \textit{candidate set} $M$ with
probability $p$. Two neighbors in $H$ may both have chosen to join $M$, so $M$
may not be independent in $H$. We would therefore like to select an MIS of the
graph induced by $M$, $H[M]$. In order to do this, the algorithm attempts to
communicate all known adjacencies in $H[M]$ to every client in the network, so
that each client may (deterministically) compute the same MIS. The algorithm relies on
Algorithm \textsc{DisseminateAdjacencies}
(Algorithm \ref{alg:DisseminateAdjacencies}) developed in Section
\ref{section:Dissemination} to perform this communication.

\begin{algo}
\textbf{Input:} Complete bipartite graph $G$ with partition $(\mathcal{F}, \mathcal{C})$ and $H$, an overlay
network on $\mathcal{F}$.\\
\textbf{Output:} A $2$-ruling set $T$ of $H$
{\small
\begin{tabbing}
......\=a..\=b..\=c..\=d..\=e..\=f..\=g..\=h..\=i..\=j..\=k..\=l\kill
1.\>$i := 1$; $p := p_1 = \frac{1}{8 \cdot n_f^{1 / 2}}$; $T := \emptyset$\\
2.\>\textbf{while} $|E(H)| > 0$ \textbf{do}\\
\>\>\textbf{\emph{Start of Iteration:}}\\
3.\>\>$M := \emptyset$\\
4.\>\>Each facility $x$ joins $M$ with a probability $p$.\\
5.\>\>Run Algorithm \textsc{DisseminateAdjacencies} for
$7 \log \log \min \{n_f, n_c\}$ iterations to communicate\\
\>\>\>the edges in $H[M]$ to all clients in the network.\\
6.\>\>\textbf{if} Algorithm \textsc{DisseminateAdjacencies} completes within
the allotted number of iterations \textbf{then}\\
7.\>\>\>Each client computes the same MIS $L$ on $M$ using a deterministic
algorithm.\\
8.\>\>\>$T := T \cup L$\\
9.\>\>\>Remove $M \cup N(M)$ from $H$.\\
10.\>\>\>$i := i + 1$; $p := p_i = \frac{1}{8 \cdot n_f^{2^{-i}}}$\\
11.\>\>\textbf{else}\\
12.\>\>\>$i := i - 1$; $p := p_i = \frac{1}{8 \cdot n_f^{2^{-i}}}$\\
13.\>\>\textbf{if} $|E(H)| = 0$ \textbf{then break};\\
\>\>\textbf{\emph{End of Iteration:}}\\
14.\> Output $T$.
\end{tabbing}}
\caption{\textsc{Facility2RulingSet}}
\label{alg:Bipartite2RulingSet}
\end{algo}

For Algorithm \textsc{DisseminateAdjacencies} to terminate quickly, we require
that the number of edges in $H[M]$ be $O(n_f)$. This requires the probability
$p$ to be chosen carefully as a function of $n_f$ and the number of edges in
$H$. Due to the lack of aggregated information, nodes of the network do not
generally know the number of edges in $H$ and thus the choice of $p$ may be
``incorrect'' in certain iterations. To deal with the possibility that $p$ may
be too large (and hence $H[M]$ may have too many edges), the dissemination
procedure is not allowed to run indefinitely -- rather, it is cut off after
$7 \log \log \min \{n_f, n_c\}$ iterations of disseminating hashing. If
dissemination was successful, i.e. the subroutine completed prior to the cutoff,
then each client receives complete information about the adjacencies in $H[M]$, and thus each is able
to compute the same MIS in $H[M]$. Also, if dissemination was successful, then $M$
and its neighborhood, $N(M)$, are removed from $H$ and the next iteration is run
with a larger probability $p$. On the other hand, if dissemination was
unsuccessful, the current iteration of \textsc{Facility2RulingSet} is terminated
and the next iteration is run with a smaller probability $p$ (to make success
more likely the next time).

To analyze the progress of the algorithm, we define two notions --
\textit{states} and \textit{levels}. For the remainder of this section, we use
the term \textit{state} (of the algorithm) to refer to the current probability
value $p$. The probability $p$ can take on values
$\left(\frac{1}{8 \cdot n_f^{2^{-i}}}\right)$ for
$i = 0, 1, \ldots, \Theta(\log \log n_f)$. We use the term \textit{level} to
refer to the progress made up until the current iteration.
Specifically, the $j$th level
$L_j$, for $j = 0, 1, \ldots, \Theta(\log \log n_f)$, is defined as having been
reached when the number of facility adjacencies remaining in $H$ becomes less
than or equal to $l_j = 8 \cdot n_f^{1 + 2^{-j}}$. In addition, we define one
special level $L_{\ast}$ as the level in which no facility adjacencies remain.
These values for the states and levels are chosen so that, once level $L_i$ has
been reached, one iteration run in state $i + 1$ has at least a
probability-$\frac{1}{2}$ chance of advancing progress to level $L_{i+1}$.

\subsection{Analysis}

It is easy to verify that the set $T$ computed by Algorithm
\ref{alg:Bipartite2RulingSet} (\textsc{Facility2RulingSet}) is a $2$-ruling set and we now turn our attention
to the expected running time of this algorithm. The
algorithm halts exactly when level $L_{\ast}$ is reached (this termination
condition is detected in Line 15), and so it suffices to bound the expected
number of rounds necessary for progress (removal of edges from $H$) to reach
level $L_{\ast}$. The following lemmas show that quick progress is made when the
probability $p$ matches the level of progress made thus far.

\begin{lemma}
Suppose $|E(H)| \leq l_i$ (progress has reached level $L_i$) and in this
situation one iteration is run in state $i + 1$ (with $p = p_{i+1}$). Then in
this iteration, the probability that Algorithm \textsc{DisseminateAdjacencies}
succeeds is at least $\frac{3}{4}$.
\label{lemma:progress_dissemination}
\end{lemma}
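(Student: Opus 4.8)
The plan is to split the event ``\textsc{DisseminateAdjacencies} fails'' into two parts: (a) the random subgraph $H[M]$ turns out to have more than $n_f$ edges, and (b) $H[M]$ has at most $n_f$ edges but the subroutine nonetheless runs past its hard cutoff of $7\log\log\min\{n_f,n_c\}$ iterations. Bounding $\Pr[\text{(a)}]$ and $\Pr[\text{(b)}]$ each by $\tfrac18$ and taking a union bound will give success probability at least $\tfrac34$.

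For part (a), I would use only linearity of expectation and the independence of the facilities' coin flips. Since each facility joins $M$ independently with probability $p=p_{i+1}$, an edge of $H$ appears in $H[M]$ exactly when both endpoints are selected, which has probability $p_{i+1}^2$. Hence $\mathbf{E}\bigl[|E(H[M])|\bigr]=|E(H)|\cdot p_{i+1}^2\le l_i\cdot p_{i+1}^2$. Substituting $l_i=8\,n_f^{\,1+2^{-i}}$ and $p_{i+1}^2=\tfrac{1}{64}\,n_f^{-2^{-i}}$ collapses the exponents and yields $\mathbf{E}\bigl[|E(H[M])|\bigr]\le n_f/8$; by Markov's inequality, $\Pr[\,|E(H[M])|>n_f\,]\le\tfrac18$. (This is precisely why the states and levels were defined with their particular exponents: state $i+1$ is calibrated to level $L_i$.)

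For part (b), I would condition on $|E(H[M])|\le n_f$. Then $H[M]$ satisfies the precondition of \textsc{DisseminateAdjacencies} — and each of its adjacencies, being an adjacency of $H\supseteq H[M]$, is witnessed by at least one client by construction — so the analysis of Section~\ref{section:Dissemination} applies verbatim. That analysis shows the subroutine halts once the total message count drops to $\le 48\,n_f$; by Lemma~\ref{lemma:number_steps} this requires crossing only $N=\log\log\min\{n_f,n_c\}+O(1)$ of the levels $t_i$ (with an initial $O(1)$ deterministically-shrinking levels), and by Lemma~\ref{lemma:bound_prob} each level is crossed with probability at least $\tfrac12$ in each iteration, independently of the past. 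Thus the number of iterations is stochastically dominated by $O(1)+\sum_{i=1}^{N}G_i$ with the $G_i$ i.i.d.\ $\mathrm{Geom}(\tfrac12)$; this sum has mean $\Theta(\log\log\min\{n_f,n_c\})$, a constant factor below the threshold $7\log\log\min\{n_f,n_c\}$, so a Chernoff bound for negative-binomial tails (equivalently, a lower-tail Chernoff bound on the number of heads among $7\log\log\min\{n_f,n_c\}$ fair coin flips) makes the overshoot probability $e^{-\Omega(\log\log\min\{n_f,n_c\})}\le\tfrac18$. Degenerate values of $\min\{n_f,n_c\}$, for which $\log\log\min\{n_f,n_c\}$ is a tiny constant, are dispatched directly: there the subroutine terminates in $O(1)$ iterations, as already observed in the proof of Lemma~\ref{lemma:number_steps}. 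Combining, $\Pr[\text{fail}]\le\tfrac18+\tfrac18=\tfrac14$.

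I expect the last paragraph to be the real obstacle: upgrading the \emph{expected}-$O(\log\log\min\{n_f,n_c\})$ round count of \textsc{DisseminateAdjacencies} (Theorem~\ref{theorem:dissemination_rounds}) into a high-probability statement that it beats the fixed cutoff $7\log\log\min\{n_f,n_c\}$. Markov's inequality alone is too weak here (it would give only a $\approx\tfrac27$ bound, not $\tfrac18$), so one genuinely needs the concentration of a sum of geometric random variables, and must confirm that the constant $7$ leaves enough slack over the $\approx 2$ expected iterations per level. The rest is routine arithmetic with the definitions of $p_{i+1}$ and $l_i$.
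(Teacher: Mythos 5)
Your decomposition of the failure event is exactly the paper's: let $E_c$ be the event that $H[M]$ has more than $n_f$ edges and $E_T$ the event of a timeout, bound $\mathbf{P}(E_c)\le\frac18$ by Markov applied to $\mathbf{E}(a)\le l_i\,p_{i+1}^2=n_f/8$, and handle the timeout conditionally on $\overline{E_c}$. Part (a) of your argument is verbatim the paper's. Where you diverge is part (b): the paper does \emph{not} use a Chernoff bound for sums of geometrics. It applies Markov's inequality once more, to the conditional expectation $\mathbf{E}(T_d\mid a\le n_f)\le\log\log\min\{n_f,n_c\}$ furnished by Theorem~\ref{theorem:dissemination_rounds}, obtaining $\mathbf{P}(E_T\mid\overline{E_c})\le\frac17$, and then writes $\mathbf{P}(E_T)\le\mathbf{P}(E_c)+\mathbf{P}(E_T\mid\overline{E_c})\cdot\mathbf{P}(\overline{E_c})\le\frac18+\frac17\cdot\frac78=\frac14$. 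So your claim that ``Markov alone is too weak'' is not accurate as the paper arranges things: the factor $\mathbf{P}(\overline{E_c})=\frac78$ is exactly what turns $\frac17$ into $\frac18$. That said, your instinct points at a genuine soft spot: the paper's step silently assumes the constant in the $O(\log\log\min\{n_f,n_c\})$ expectation of Theorem~\ref{theorem:dissemination_rounds} is $1$, whereas the proof of that theorem (about $\log\log\min\{n_f,n_c\}+2$ levels, each taking expected at most $2$ iterations by Lemma~\ref{lemma:bound_prob}) really gives something closer to $2\log\log\min\{n_f,n_c\}+O(1)$. Your concentration bound on $O(1)+\sum_i G_i$ with $G_i$ geometric is robust to that constant and is a legitimate alternative; its cost is that $e^{-\Omega(\log\log\min\{n_f,n_c\})}\le\frac18$ only holds once $\log\log\min\{n_f,n_c\}$ clears an explicit constant, so you would need to carry out that arithmetic (and verify the stochastic domination by independent geometrics) rather than just asserting it, and to handle the intermediate range of $\min\{n_f,n_c\}$ --- not only the degenerate case $\min\{n_f,n_c\}\le48$ where the subroutine exits immediately. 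Either route yields the stated $\frac34$; yours buys immunity to the hidden constant at the price of a slightly heavier tail argument.
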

\begin{proof}
Let $a$ refer to the number of adjacencies (edges) in $H[M]$. With $p = p_{i+1}$,
$\mathbf{E}(a) = |E(H)| \cdot p_{i+1}^2 \leq l_i \cdot p_{i+1}^2$. Plugging
the values of $l_i$ and $p_{i+1}$ into this bound, we see that
$\mathbf{E}(a) \leq \frac{n_f}{8}$. By Markov's inequality,
$\mathbf{P}(a > n_f) \leq \frac{\mathbf{E}(a)}{n_f} = \frac{1}{8}$.

Let $T_d$ be the number of iterations that dissemination \textit{would} run for
if it were allowed to run to completion in this iteration. (Recall that,
regardless of $T_d$, we always terminate dissemination after
$7 \log \log \min \{n_f, n_c\}$ iterations.) By Theorem
\ref{theorem:dissemination_rounds},
$\mathbf{E}(T_d \mid a \leq n_f) \leq \log \log \min \{n_f, n_c\}$. Therefore,
$\mathbf{P}(T_d > 7 \log \log \min \{n_f, n_c\} \mid a \leq n_f)$ is bounded
above by $\frac{1}{7}$ (again, using Markov's inequality). If $E_c$ is the event
that $a > n_f$, and $E_T$ is the event that
$T_d > 7 \log \log \min \{n_f, n_c\}$, then we can bound $\mathbf{P}(E_T)$ above
by
\begin{align*}
\mathbf{P}(E_T) &\leq \mathbf{P}(E_c \cup E_T)\\[1mm]
&= \mathbf{P}(E_c) + \mathbf{P}(E_T \cap \overline{E_c})\\[1mm]
&= \mathbf{P}(E_c) + %
\mathbf{P}(E_T \mid \overline{E_c}) \cdot \mathbf{P}(\overline{E_c})\\
&\leq \frac{1}{8} + \mathbf{P}(E_T \mid \overline{E_c}) \cdot \frac{7}{8}\\
&\leq \frac{1}{8} + \frac{1}{7} \cdot \frac{7}{8}\\
&= \frac{1}{4}\\
\end{align*}
So with probability at least $\frac{3}{4}$, dissemination succeeds (completes in
the time allotted).
\end{proof}

\begin{lemma}
Suppose $|E(H)| \leq l_i$ (progress has reached level $L_i$). Then, after one
iteration run in state $i + 1$ (with $p = p_{i+1}$), the probability that level
$L_{i+1}$ will be reached (where $|E(H)| \leq l_{i+1}$) is at least
$\frac{1}{2}$.
\label{lemma:progress_pi}
\end{lemma}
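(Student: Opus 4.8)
The plan is to combine Lemma~\ref{lemma:progress_dissemination} with a Markov bound on the number of edges that remain after one iteration run in state $i+1$. To reach level $L_{i+1}$ it suffices that two events both occur: (a) Algorithm~\textsc{DisseminateAdjacencies} succeeds, so that the MIS $L$ of $H[M]$ is in fact computed and $M\cup N(M)$ is in fact removed from $H$; and (b) the number of edges of $H$ surviving the removal of $M\cup N(M)$ is at most $l_{i+1}$. By Lemma~\ref{lemma:progress_dissemination}, event (a) fails with probability at most $\tfrac14$, so by a union bound it is enough to show event (b) fails with probability at most $\tfrac14$; that gives $\mathbf{P}(\text{reach }L_{i+1})\ge 1-\tfrac14-\tfrac14=\tfrac12$.

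For event (b), let $E'$ be the random number of edges of $H$ both of whose endpoints lie outside $M\cup N(M)$; this is exactly the number of edges remaining when dissemination succeeds. Fix an edge $\{u,v\}\in E(H)$ and write $d(\cdot)$ for degree in the current graph $H$. The edge $\{u,v\}$ survives precisely when no vertex of $N[u]\cup N[v]$ is selected into $M$, which (by independence of the selections) has probability $(1-p_{i+1})^{|N[u]\cup N[v]|}$. Using $|N[u]\cup N[v]|\ge\max\{d(u),d(v)\}+1\ge\tfrac12(d(u)+d(v))$ together with $1-x\le e^{-x}$, this is at most $e^{-p_{i+1}d(u)/2}\,e^{-p_{i+1}d(v)/2}$. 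Summing over all edges and applying $ab\le\tfrac12(a^2+b^2)$ term by term (each vertex $v$ appearing in $d(v)$ edges),
\[
\mathbf{E}(E')\;\le\;\sum_{\{u,v\}\in E(H)}e^{-p_{i+1}d(u)/2}\,e^{-p_{i+1}d(v)/2}\;\le\;\frac12\sum_{v}d(v)\,e^{-p_{i+1}d(v)}\;\le\;\frac{n_f}{2e\,p_{i+1}},
\]
where the last step uses that $x\mapsto xe^{-p_{i+1}x}$ attains its maximum $1/(e\,p_{i+1})$ at $x=1/p_{i+1}$ and that $H$ has at most $n_f$ vertices. Substituting $p_{i+1}=\frac{1}{8 n_f^{2^{-(i+1)}}}$ and recalling $l_{i+1}=8 n_f^{1+2^{-(i+1)}}$ gives $\mathbf{E}(E')\le\frac{4}{e}\,n_f^{1+2^{-(i+1)}}=\frac{1}{2e}\,l_{i+1}$.

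By Markov's inequality $\mathbf{P}(E'>l_{i+1})\le\frac{1}{2e}<\frac14$, which is the bound event (b) needs, and combining with Lemma~\ref{lemma:progress_dissemination} the probability of reaching level $L_{i+1}$ is at least $1-\tfrac14-\tfrac{1}{2e}>\tfrac12$. I expect the main obstacle to be the middle paragraph: the per-edge survival probability depends on the local degrees, so one must pass to a sum over vertices (the $ab\le\tfrac12(a^2+b^2)$ step) and then exploit the unimodality of $xe^{-p_{i+1}x}$ to get a bound depending only on the vertex count and $p_{i+1}$; it is precisely here that the choice of $p_{i+1}$ relative to $l_{i+1}$ is calibrated so that the estimate beats the Markov threshold with room to spare. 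As in Lemma~\ref{lemma:progress_dissemination}, the hypothesis $|E(H)|\le l_i$ is not needed for this edge-count estimate (which only uses $|V(H)|\le n_f$), but it is essential for the dissemination-success half.
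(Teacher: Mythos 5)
Your proof is correct and follows essentially the same route as the paper's: both bound the expected number of edges surviving the removal of $M\cup N(M)$ by $\frac{n_f}{2e\,p_{i+1}}=\frac{l_{i+1}}{2e}$ using $xe^{-x}\le 1/e$, apply Markov's inequality, and combine with the $\frac34$ dissemination-success bound of Lemma~\ref{lemma:progress_dissemination}. The only differences are cosmetic: the paper reaches the expectation bound by bounding each vertex's expected surviving degree (rather than each edge's survival probability followed by your $ab\le\frac12(a^2+b^2)$ step), and it combines the two events by conditioning on $\overline{E_T}$ rather than by a union bound; both yield a final success probability exceeding $\frac12$.
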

\begin{proof}
In the present iteration, run with $p = p_{i+1}$, we first ignore the success or
failure of dissemination (within $7 \log \log n_f$ iterations of hashing), and
assume instead that dissemination runs as long as necessary to succeed.
Consider, in this modified scenario, the expected number of edges that will
remain in $H$. The number of edges can be calculated as twice the sum of
degrees, and we can bound the expected degree in $H$ of a facility $x$ above by
the current degree of $x$ multiplied by the probability that $x$ remains active.
The probability that $x$ remains active is at most
$(1 - p_{i+1})^{\mathrm{deg}_H(x)}$ (the probability that no neighbor of $x$
becomes a candidate). In turn, this quantity is less than or equal to
$e^{-p_{i+1} \cdot \mathrm{deg}_H(x)}$. Thus, if $m$ refers to the number of
edges remaining in $H$ after the present iteration (again, with dissemination
running to completion), we have
\begin{align*}
\mathbf{E}(m) &= \frac{1}{2} \sum\limits_{x \in \mathcal{F}} %
\mathrm{deg}(x) \cdot e^{-p_{i+1} \cdot \mathrm{deg}(x)}\\
&= \frac{1}{2 p_{i+1}} \sum\limits_{x \in \mathcal{F}} %
p_{i+1} \cdot \mathrm{deg}(x) \cdot e^{-p_{i+1} \cdot \mathrm{deg}(x)}\\
&\leq \frac{1}{2 p_{i+1}} \sum\limits_{x \in \mathcal{F}} \frac{1}{e}\\
&= \frac{1}{2 e \cdot p_{i+1}} \cdot n_f\\
&= \frac{1}{2 e} \cdot 8 n^{\frac{1}{2^{i+1}}} \cdot n_f\\
&= \frac{1}{2 e} \cdot l_{i+1}\\
\end{align*}
The inequality in the above calculation (Line 3) follows from the fact that $x \cdot e^{-x} \le e^{-1}$ for all
real $x$.
Since the unconditional expected value satisfies
$\mathbf{E}(m) \leq \frac{l_{i+1}}{2 e}$, the conditional expectation
$\mathbf{E}(m \mid \overline{E_T})$ is bounded above by
$\frac{4}{3} \cdot \mathbf{E}(m) = \frac{2}{3 e} \cdot l_{i+1}$. Recall the
definition of the event $E_T$ from the proof of Lemma
\ref{lemma:progress_dissemination}. Therefore, using Markov's inequality, the
probability that $m > l_{i+1}$ given $\overline{E_T}$ is no greater than
$\left(\frac{2 l_{i+1}}{3 e}\right) / l_{i+1} = \frac{2}{3 e} < \frac{1}{3}$.

Thus we have $\mathbf{P}(m \leq l_{i+1} \mid \overline{E_T}) > \frac{2}{3}$, and
using $\mathbf{P}(\overline{E_T}) \geq \frac{3}{4}$
(from Lemma \ref{lemma:progress_dissemination}),
\[\mathbf{P}(m \leq l_{i+1}) \geq \mathbf{P}(\{m \leq l_{i+1}\} \cap \overline{E_T})
= \mathbf{P}(m \leq l_{i+1} \mid \overline{E_T}) \cdot \mathbf{P}(\overline{E_T})
> \frac{2}{3} \cdot \frac{3}{4} = \frac{1}{2}\]
\end{proof}

\noindent
Thus, once level $L_i$ has been reached, we can expect that only a constant
number of iterations run in state $i + 1$ would be required to reach level
$L_{i+1}$. Therefore, the question is, ``How many iterations of the algorithm
are required to execute state $i + 1$ enough times?'' To answer this question,
we abstract the algorithm as a stochastic process that can be modeled as a
(non-Markov) simple random walk on the integers
$0, 1, 2, \ldots, \Theta(\log \log n_f)$ with the extra property that, whenever
the random walk arrives at state $i + 1$, a (fair) coin is flipped. We place a
bound on the expected number of steps before this coin toss comes up heads.

First, consider the return time to state $i + 1$. In order to prove that the
expected number of iterations (steps) necessary before either
$|E(H)| \leq l_{i+1}$ or $p = p_{i + 1}$ is $O(\log \log n_f)$, we consider two
regimes -- $p > p_{i + 1}$ and $p < p_{i + 1}$. When $p$ is large (in the regime
consisting of probability states intended for fewer edges than currently remain
in $H$), it is likely that a single iteration of Algorithm
\ref{alg:Bipartite2RulingSet} will generate a large number of adjacencies
between candidate facilities. Thus, dissemination will likely not complete
before ``timing out,'' and it is likely that $p$ will be decreased prior to the
next iteration. Conversely, when $p$ is small (in the regime consisting of
probability states intended for more edges than currently remain in $H$), a
single iteration of Algorithm \ref{alg:Bipartite2RulingSet} will likely generate
fewer than $n_f$ adjacencies between candidate facilities, and thus it is likely
that dissemination will complete before ``timing out.'' In this case, $p$ will
advance prior to the next iteration. This analysis is accomplished in the
following lemmas and leads to the subsequent theorem.

\begin{lemma}
Consider a simple random walk on the integers $[0, i]$ with transition
probabilities $\{p_{j,k}\}$ satisfying $p_{j,j+1} = \frac{3}{4}$
($j = 0, \ldots, i - 1$), $p_{j,j-1} = \frac{1}{4}$, ($j = 1, \ldots, i$),
$p_{i,i} = \frac{3}{4}$, and $p_{0,0} = \frac{1}{4}$. For such a random walk
beginning at $0$, the expected hitting time of $i$ is $O(i)$.
\label{lemma:hitting_time}
\end{lemma}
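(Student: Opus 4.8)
The plan is to analyze this biased random walk on $[0,i]$ by a standard first-step/expected-hitting-time argument, exploiting the fact that the walk is biased toward $i$ (probability $3/4$ of moving right versus $1/4$ of moving left). Let $h_j$ denote the expected hitting time of state $i$ starting from state $j$; I want to show $h_0 = O(i)$. First I would set up the recurrence: $h_i = 0$, and for $1 \le j \le i-1$, $h_j = 1 + \tfrac{3}{4} h_{j+1} + \tfrac{1}{4} h_{j-1}$, with the boundary relation $h_0 = 1 + \tfrac{3}{4} h_1 + \tfrac{1}{4} h_0$ coming from the reflecting-type behavior $p_{0,0} = 1/4$ (which rearranges to $h_0 = \tfrac{4}{3} + h_1$).

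The cleanest route is to work with the gap variables $d_j := h_{j} - h_{j+1}$ for $j = 0, 1, \ldots, i-1$, which represent the expected additional time to traverse one step. From the interior recurrence one gets a simple affine relation $d_{j-1} = \tfrac{1}{3} d_j + \tfrac{4}{3}$ for $1 \le j \le i-1$, and the boundary gives $d_0 = \tfrac{4}{3}$. This linear recurrence in the $d_j$'s has the fixed point $d^* = 2$ (solving $d = \tfrac13 d + \tfrac43$), so writing $d_j = 2 + e_j$ yields $e_{j-1} = \tfrac13 e_j$, i.e. the deviations shrink geometrically as $j$ decreases. Hence every $d_j$ lies in a bounded interval — concretely each $d_j \le 2$ (since $d_0 = 4/3 < 2$ and the $e_j$ are bounded; one checks the sign works out) — and therefore $h_0 = \sum_{j=0}^{i-1} d_j \le 2i = O(i)$, which is the claim. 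An even softer alternative, if one wants to avoid solving the recurrence exactly, is to dominate the walk by a random walk with a drift of $1/2$ per step toward $i$ (expected displacement $\tfrac34 - \tfrac14 = \tfrac12$ at interior points, and never worse at the boundaries) and invoke the optional stopping theorem on the martingale $X_t + \tfrac{t}{2}$ (where $X_t$ is the position), giving $\mathbf{E}[\tau] \le 2i$ directly; one just has to check the reflecting endpoints only help.

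I expect the main obstacle to be purely bookkeeping at the two endpoints: the self-loop at $0$ (with $p_{0,0} = 1/4$) and the self-loop at $i$ (with $p_{i,i} = 3/4$, though $i$ is absorbing for hitting-time purposes so it is harmless) must be folded into the recurrence correctly, and one must confirm that these boundary effects do not destroy the geometric decay of the gaps $d_j$ — in particular that the solution to the $d_j$-recurrence stays nonnegative and bounded so the telescoping sum is legitimately $O(i)$. A secondary subtlety, if the martingale route is taken, is verifying the integrability/stopping-time conditions needed for optional stopping (finiteness of $\mathbf{E}[\tau]$, which follows because from any state there is a probability-$\ge (3/4)^i$ path to $i$ within $i$ steps, so $\tau$ has exponential tails). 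Either way the conclusion $h_0 = O(i)$ follows.
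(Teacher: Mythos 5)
Your approach is sound, and it is worth noting that the paper does not actually prove this lemma at all --- it dismisses it with ``This is an exercise in probability; see \cite{MitzenmacherBook}'' --- so your self-contained argument supplies something the paper omits. The gap-telescoping method is the standard and correct one here, and your final bound matches the exact answer: one can compute $h_0 = 2i - 1 + 3^{-i} < 2i$. However, you have written the gap recurrence in the wrong direction. Substituting $h_{j-1} = h_j + d_{j-1}$ and $h_{j+1} = h_j - d_j$ into $h_j = 1 + \tfrac34 h_{j+1} + \tfrac14 h_{j-1}$ gives $\tfrac34 d_j = 1 + \tfrac14 d_{j-1}$, i.e.\ $d_j = \tfrac13 d_{j-1} + \tfrac43$, which runs \emph{forward} from the boundary value $d_0 = \tfrac43$; as literally written, your relation $d_{j-1} = \tfrac13 d_j + \tfrac43$ combined with $d_0 = \tfrac43$ forces $d_1 = 0$, $d_2 = -4$, and a divergent negative sequence. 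With the corrected direction, $e_j = (\tfrac13)^j e_0$ with $e_0 = -\tfrac23$, so $\tfrac43 \le d_j < 2$ for all $j$ and the telescoping sum gives $h_0 < 2i$ exactly as you intend. A similar sign slip appears in your martingale alternative: with drift $+\tfrac12$ toward $i$, the relevant (sub)martingale is $X_t - \tfrac{t}{2}$, not $X_t + \tfrac{t}{2}$; optional stopping (justified by the exponential tail on $\tau$ that you correctly identify) then yields $\mathbf{E}[\tau] \le 2i$. Both routes are valid once these directional errors are fixed, and either would serve as a complete proof where the paper gives only a citation.
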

\begin{proof}
This is an exercise in probability; see \cite{MitzenmacherBook}.
\end{proof}

\begin{lemma}
When $j \leq i$, the expected number of iterations required before returning to
state $i + 1$ is $O(\log \log n_f)$.
\label{lemma:return_below}
\end{lemma}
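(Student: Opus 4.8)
\textbf{Proof proposal for Lemma \ref{lemma:return_below}.}

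The plan is to bound the return time to state $i+1$ by analyzing two separate phases of the walk: the time spent in states $j \geq i+1$ (the ``large-$p$'' regime) and the time spent in states $j \leq i$ (the ``small-$p$'' regime), then combine them. First I would argue that whenever the walk is at a state $j \geq i+1$, the current probability $p = p_j \geq p_{i+1}$, and since $|E(H)| \leq l_i$ by hypothesis (we have already reached level $L_i$ and progress only improves), the expected number of candidate-candidate adjacencies is $\mathbf{E}(a) = |E(H)| \cdot p_j^2 \leq l_i \cdot p_{i+1}^2 \leq n_f/8$. Repeating the argument of Lemma \ref{lemma:progress_dissemination} verbatim, dissemination times out (event $E_T$) with probability at least $\ldots$ — wait, we need the opposite: when $p$ is \emph{too large} we want dissemination to \emph{fail} so that $p$ decreases. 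Here $p_j$ is not necessarily too large relative to $l_i$; rather the point is that from any state $\geq i+1$, the downward drift dominates. So instead I would say: from any state $j \geq i+1$, with probability at least $3/4$ dissemination succeeds and $p$ moves \emph{up}, and with the remaining probability it moves down toward $i+1$. That is the wrong direction for a quick return. The cleaner route is to observe that the only states relevant to ``returning to $i+1$'' when we start at $j \leq i$ are states $0, 1, \ldots, i+1$, and in this range we must show the walk has an \emph{upward} drift.

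So the core step is: for every state $j$ with $0 \leq j \leq i$, one iteration run in state $j$ advances the state (increases $p$) with probability at least $3/4$ and decreases it with probability at most $1/4$. This follows because $p_j = \frac{1}{8 n_f^{2^{-j}}} \leq p_{i+1}$ (since $j \leq i$ implies $2^{-j} \geq 2^{-i} > 2^{-(i+1)}$, hence the exponent of $n_f$ in the denominator is larger, making $p_j$ smaller — actually $p_j \leq p_i$), combined with $|E(H)| \leq l_i$: the expected number of edges in $H[M]$ is $|E(H)| \cdot p_j^2 \leq l_i \cdot p_i^2$, and plugging in $l_i = 8 n_f^{1+2^{-i}}$ and $p_i = \frac{1}{8 n_f^{2^{-i}}}$ gives $\mathbf{E}(a) \leq n_f/8$; then exactly as in Lemma \ref{lemma:progress_dissemination}, the combined probability that dissemination fails is at most $1/4$, so $p$ advances with probability at least $3/4$. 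This gives us precisely the transition probabilities of the random walk described in Lemma \ref{lemma:hitting_time}, on the integer interval $[0, i+1]$ (or $[j, i+1]$), except possibly pessimistically: the walk may actually be even more biased, and success events make it jump straight up. A standard coupling/domination argument lets us replace the true walk by the idealized biased walk of Lemma \ref{lemma:hitting_time}.

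Then I would invoke Lemma \ref{lemma:hitting_time}: the expected hitting time of state $i+1$, starting from any $j \leq i$, is $O(i+1) = O(\log\log n_f)$, since $i = O(\log\log n_f)$ (the number of distinct states is $\Theta(\log\log n_f)$). Finally, I would note the one subtlety that needs care: the random walk is \emph{not} Markovian — the transition probabilities depend on the current value of $|E(H)|$, not just the state — but the hypothesis ``$|E(H)| \leq l_i$'' is preserved throughout (edges are only ever removed from $H$), so the bound $\mathbf{E}(a) \leq n_f/8$ holds at \emph{every} step regardless of history. Hence the true process is stochastically dominated, step by step, by the idealized biased walk, and the hitting-time bound transfers.

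The main obstacle I anticipate is handling the non-Markovian nature of the walk rigorously: one must argue that conditioning on the entire past (all previous iterations and their outcomes) still leaves the per-step advance probability at least $3/4$, which in turn requires the invariant $|E(H)| \leq l_i$ to be genuinely maintained — this is where the monotonicity of edge removal in Algorithm \ref{alg:Bipartite2RulingSet} (Line 9 only removes vertices/edges, never adds) is essential. Once that invariant is pinned down, the domination by the clean biased walk of Lemma \ref{lemma:hitting_time} is routine, and the $O(\log\log n_f)$ bound is immediate from the $O(i)$ hitting-time estimate together with $i = O(\log\log n_f)$.
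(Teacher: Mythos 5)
Your proposal is correct and follows essentially the same route as the paper: show that in every state $j \leq i$ the invariant $|E(H)| \leq l_i$ together with $p_j \leq p_{i+1}$ forces the expected number of candidate adjacencies to be at most $n_f/8$, hence (by the argument of Lemma \ref{lemma:progress_dissemination}) dissemination succeeds and the walk moves up with probability at least $\frac{3}{4}$, and then invoke Lemma \ref{lemma:hitting_time} to get an $O(i) = O(\log\log n_f)$ expected hitting time of state $i+1$. The self-corrected detour in your opening paragraph and the additional care you devote to the non-Markovian/monotonicity issue do not alter the argument, which matches the paper's (terser) proof.
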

\begin{proof}
By Lemma \ref{lemma:hitting_time}, it suffices to show that when $j < i$, the
probability of successful dissemination in state $j$ is at least $\frac{3}{4}$.
By the proof of Lemma \ref{lemma:progress_dissemination}, this would be true
were the current iteration run with $p = p_i$. Since $p_j < p_i$, the
probability of successful dissemination is greater in state $j$ then in state
$i$, and the lemma follows.
\end{proof}

\begin{lemma}
When $j > i$, the expected number of iterations required before returning to
state $i + 1$ or advancing to at least level $L_{i+1}$ is $O(\log \log n_f)$.
\label{lemma:return_above}
\end{lemma}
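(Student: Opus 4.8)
The plan is to push the random-walk abstraction set up just before the lemma through to a running-time bound. Write $j'$ for the current state index, and regard the execution started at $j'=j$ (the case $j=i+1$ being immediate) as a walk on $\{0,1,\dots,\Theta(\log\log n_f)\}$; the goal is to bound the expected number of iterations until the walk reaches $i+1$ or the residual edge count of $H$ drops to $l_{i+1}$ (level $L_{i+1}$). The engine is a dichotomy that holds at every state $j'\ge i+2$ so long as $L_{i+1}$ has not been reached (so that $|E(H)|>l_{i+1}$): either \textsc{DisseminateAdjacencies} times out with probability at least $\frac34$, so the iteration decrements the state and pushes the walk toward $i+1$ -- and at $j'=i+2$ this \emph{is} a hit on $i+1$; or \textsc{DisseminateAdjacencies} succeeds with probability bounded away from the failure-likely threshold, in which case a single successful iteration deletes so many candidates from $H$ that $L_{i+1}$ is reached with constant probability. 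Call these ``type-$D$'' and ``type-$T$'' states; the classification of a state can shift as $H$ shrinks, but every iteration falls into one of the two cases.

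For the quantitative core I would rerun the edge-removal computation in the proof of Lemma \ref{lemma:progress_pi} with $p=p_{j'}$ in place of $p_{i+1}$: the expected number of edges left in $H$ after a completed iteration is at most $\frac{n_f}{2e\,p_{j'}}=\frac{l_{j'}}{2e}$, and since $j'\ge i+2$ forces $l_{j'}\le l_{i+2}<l_{i+1}$, this is at most $\frac{1}{2e}\,l_{i+1}$. By Markov the residual exceeds $l_{i+1}$ with probability at most $\frac{1}{2e}$ in the run-to-completion coupling, and on the event that dissemination actually succeeds the true residual equals the coupled one; hence an iteration at state $j'$ reaches $L_{i+1}$ with probability at least $\mathbf{P}(\mathrm{success})-\frac{1}{2e}$. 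Declaring $j'$ type-$T$ when $\mathbf{P}(\mathrm{success})\ge\frac{1}{2e}+\frac{1}{20}$ and type-$D$ otherwise, type-$T$ states terminate (hit $i+1$ or $L_{i+1}$) with probability at least $\frac{1}{20}$, while type-$D$ states fail -- and the walk descends -- with probability at least $1-\frac{1}{2e}-\frac{1}{20}>\frac34$, using only $\mathbf{P}(\mathrm{fail})=1-\mathbf{P}(\mathrm{success})$. No concentration/second-moment argument is needed; everything runs on Markov bounds already in play.

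For the assembly: each type-$T$ iteration terminates with probability at least $\frac{1}{20}$ regardless of the history, so the number of type-$T$ iterations before termination is stochastically dominated by a $\mathrm{Geom}(\frac{1}{20})$ variable and has expectation $O(1)$. On the type-$D$ iterations the state performs exactly the biased downward walk of (the mirror image of) Lemma \ref{lemma:hitting_time}; combining the elementary identity $(\#\text{descents})\le(\#\text{ascents})+(j-i-1)$ with the facts that $j-i-1=O(\log\log n_f)$, that each type-$D$ iteration ascends with probability at most $\frac14$, and that ascents occurring inside type-$T$ iterations number $O(1)$ in expectation, yields $O(\log\log n_f)$ type-$D$ iterations in expectation; since the state index never exceeds $\Theta(\log\log n_f)$, the total is $O(\log\log n_f)$. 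I expect the main obstacle to be this gluing step rather than any single estimate: the type-$D$/type-$T$ classification is not fixed (it drifts as edges are deleted), the biased descent is repeatedly interrupted by the geometrically few type-$T$ iterations, and one must pin the constants so that the type-$D$ failure probability genuinely clears $\frac34$ and the accounting telescopes to $O(\log\log n_f)$.
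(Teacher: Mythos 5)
Your quantitative core is exactly the paper's: in the run-to-completion coupling, the expected number of edges surviving an iteration in state $j'$ is at most $\frac{l_{j'}}{2e}$ (via $x e^{-x} \leq e^{-1}$), and Markov's inequality gives $\mathbf{P}(m > l_{i+1}) \leq \frac{1}{2e}$ since $l_{j'} \leq l_{i+1}$. Where you diverge is the assembly. The events ``dissemination times out'' and ``dissemination succeeds and level $L_{i+1}$ is reached'' are disjoint, and their union contains the coupling event $\{m \leq l_{j'}\}$; so the paper simply concludes that at every state $j > i$ the \emph{single} event ``descend or terminate'' has probability at least $1 - \frac{1}{2e} > \frac{3}{4}$, and feeds that directly into Lemma \ref{lemma:hitting_time} (termination only shortens the walk, so the domination is immediate). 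Your type-$D$/type-$T$ dichotomy is precisely an unpacking of this union bound --- $\mathbf{P}(\mathrm{fail}) + \mathbf{P}(\mathrm{terminate}) \geq 1 - \frac{1}{2e}$ forces one summand to be large --- but by splitting it you take on a history-dependent state classification, a geometric bound for the type-$T$ iterations, and a Wald-style ascent/descent count for the type-$D$ iterations. That gluing does go through (your constants clear: $1 - \frac{1}{2e} - \frac{1}{20} > \frac{3}{4}$; and the pathwise identity $\mathrm{descents} \leq \mathrm{ascents} + (j - i - 1)$ together with $\mathbf{E}[\textrm{type-}D\textrm{ ascents}] \leq \frac{1}{3}\,\mathbf{E}[\textrm{type-}D\textrm{ descents}]$ closes the recursion, modulo a truncation argument to justify the algebra on possibly-infinite expectations), so your proof is correct. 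But the case split buys nothing here: noticing the disjointness collapses the whole argument to one uniform $\frac{3}{4}$ bound and a single appeal to the hitting-time lemma, which is how the paper proceeds. Your geometric-domination idea for the terminating states is essentially the mechanism the paper reserves for Lemma \ref{lemma:expect_time_level}, where it is actually needed.
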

\begin{proof}
By Lemma \ref{lemma:hitting_time}, it suffices to show that when $j > i$, the
probability of either unsuccessful dissemination in state $j$ or progression to
level $L_j$ is at least $\frac{3}{4}$. Therefore, consider the modified scenario
where dissemination is always run to completion; we will show that the
probability of progression to level $L_j$ in this scenario is at least
$\frac{3}{4}$.

Recall from the proof of Lemma \ref{lemma:progress_pi} that, if $m$ refers to
the number of edges remaining in $H$ after the present iteration
(with dissemination run to completion), we have
\begin{align*}
\mathbf{E}(m) &= \frac{1}{2} \sum\limits_{x \in \mathcal{F}} %
\mathrm{deg}(x) \cdot e^{-p_{j} \cdot \mathrm{deg}(x)}\\
&= \frac{1}{2 p_{j}} \sum\limits_{x \in \mathcal{F}} %
p_{j} \cdot \mathrm{deg}(x) \cdot e^{-p_{j} \cdot \mathrm{deg}(x)}\\
&= \frac{1}{2 p_{j}} \sum\limits_{x \in \mathcal{F}} \frac{1}{e}\\
&= \frac{1}{2 e \cdot p_{j}} \cdot n_f\\
&= \frac{1}{2 e} \cdot 8 n^{\frac{1}{2^j}} \cdot n_f\\
&= \frac{1}{2 e} \cdot l_j\\
\end{align*}
Thus, by Markov's inequality,
$\mathbf{P}(m > l_j) \leq \frac{1}{2 e} < \frac{1}{4}$, and so the probability
of progression to level $L_j$ (when dissemination is allowed to run to
completion) is at least $\frac{3}{4}$, which completes the proof of the lemma.
\end{proof}

\begin{lemma}
Suppose that Algorithm \ref{alg:Bipartite2RulingSet} has reached level $L_i$,
and let $T_{i+1}$ be a random variable representing the number of iterations
necessary before reaching level $L_{i+1}$. Then
$\mathbf{E}(T_{i+1}) = O(\log \log n_f)$.
\label{lemma:expect_time_level}
\end{lemma}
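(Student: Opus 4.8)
The plan is to read the portion of \textsc{Facility2RulingSet} that runs after level $L_i$ is reached (and before $L_{i+1}$) as the augmented random walk described in the preceding paragraphs: a walk on the probability‑index states $\{0,1,\ldots,\Theta(\log\log n_f)\}$, with a ``success coin'' tossed every time the walk sits at state $i+1$. By Lemma \ref{lemma:progress_pi}, whenever the walk is at state $i+1$ and level $L_{i+1}$ has not yet been reached — equivalently $l_{i+1} < |E(H)| \leq l_i$, which by monotonicity of $|E(H)|$ is the only nontrivial situation — the single iteration executed there reaches level $L_{i+1}$ with probability at least $\frac{1}{2}$, and this bound is conditional on the entire history. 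Hence the number $N$ of visits to state $i+1$ before $L_{i+1}$ is attained is stochastically dominated by a geometric random variable with parameter $\frac{1}{2}$, so $\mathbf{E}(N) \leq 2$.

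Next I would bound the iteration cost between consecutive visits to state $i+1$ (and the first passage to state $i+1$ after $L_i$ is reached, from whatever state the algorithm happens to occupy then). If the walk currently sits at a state $j \leq i$, Lemma \ref{lemma:return_below} gives an expected $O(\log\log n_f)$ iterations to return to state $i+1$ (and should $L_{i+1}$ be reached en route, we are done even sooner); if it sits at a state $j > i$, Lemma \ref{lemma:return_above} gives an expected $O(\log\log n_f)$ iterations either to return to state $i+1$ or to reach level $L_{i+1}$ — both outcomes being favorable. Both of these are instances of the hitting‑time estimate of Lemma \ref{lemma:hitting_time}, which applies because the walk drifts toward state $i+1$: below $i+1$ the probability $p$ is small, dissemination succeeds with probability at least $\frac{3}{4}$ (by the argument in Lemma \ref{lemma:progress_dissemination}) and $p$ is increased; above $i+1$ the probability $p$ is large, dissemination fails (or $L_{i+1}$ is already reached) with probability at least $\frac{3}{4}$, and $p$ is decreased.

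Finally I would assemble the pieces. Write $T_{i+1}$ as the first‑passage cost to reach state $i+1$, plus the sum over the at most $N$ ``excursions'' away from state $i+1$, each of conditional expected length $O(\log\log n_f)$ regardless of history, plus the one iteration spent at each of the $N$ visits. Since $N$ has bounded mean and the excursion lengths have uniformly bounded conditional expectation, Wald's identity (equivalently, iterated expectation, $\mathbf{E}\!\sum_{k\leq N} X_k = \sum_k \mathbf{E}[\mathbf{1}_{N\geq k}\,\mathbf{E}[X_k \mid \mathcal{F}_{k-1}]] \leq O(\log\log n_f)\cdot\mathbf{E}(N)$) yields $\mathbf{E}(T_{i+1}) \leq O(\log\log n_f) + \mathbf{E}(N)\cdot O(\log\log n_f) = O(\log\log n_f)$.

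The main obstacle is not any single inequality but the bookkeeping around conditional independence: the success coin at state $i+1$ (Lemma \ref{lemma:progress_pi}) and the direction the walk takes out of state $i+1$ are driven by the same iteration's randomness, and the walk is non‑Markov (its transition probabilities are only bounded, and they depend on how many edges remain). The care therefore lies in invoking Lemmas \ref{lemma:progress_pi}, \ref{lemma:return_below}, and \ref{lemma:return_above} only in their conditional forms — each valid irrespective of the history of previous iterations — stitching them into the excursion decomposition above, and only then passing to expectations so that the geometric bound on $N$ and the per‑excursion bound combine cleanly.
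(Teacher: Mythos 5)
Your proposal is correct and follows essentially the same route as the paper's proof: decompose $T_{i+1}$ into excursions between successive visits to state $i+1$, bound the number of visits by a geometric random variable via Lemma \ref{lemma:progress_pi}, bound each excursion's expected length by $O(\log\log n_f)$ via Lemmas \ref{lemma:return_below} and \ref{lemma:return_above}, and combine with a Wald-type argument. Your use of the iterated-expectation form of Wald's identity is in fact a slightly more careful treatment of the conditioning than the paper's appeal to independence in the random-walk abstraction, but it is a refinement of the same argument rather than a different one.
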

\begin{proof}
Fix $i$ and let $N$ be the number of returns to state $i + 1$ prior to
progressing to level $L_{i+1}$. Let $S_k$ be the number of iterations run
between return $k - 1$ and $k$ to state $i + 1$, so that
$T_{i+1} = \sum_{k=1}^N S_k$. In the random walk abstraction of the algorithm,
$N$ depends only on a series of coin flips which are themselves independent of
all other history. Each $S_k$ is also independent of any coin flip and so also
of $N$ (again we emphasize that this is only true when abstracting the algorithm
to a random walk).

Now, by conditioning on $N = t$, we see that
\begin{align*}
\mathbf{E}(T_{i+1}) &= \sum\limits_{t=0}^{\infty} \mathbf{E}(T_{i+1} \mid N = t)\\
&= \sum\limits_{t=0}^{\infty} \mathbf{E}(\sum\limits_{k=1}^N S_k \mid N = t) \cdot \mathbf{P}(N = t)\\
&= \sum\limits_{t=0}^{\infty} \mathbf{E}(\sum\limits_{k=1}^t S_k \mid N = t) \cdot \mathbf{P}(N = t)\\
&= \sum\limits_{t=0}^{\infty} \mathbf{E}(\sum\limits_{k=1}^t S_k) \cdot \mathbf{P}(N = t)\\
&= \sum\limits_{t=0}^{\infty} \sum\limits_{k=1}^t \mathbf{E}(S_k) \cdot \mathbf{P}(N = t)\\
&\leq \sum\limits_{t=0}^{\infty} \sum\limits_{k=1}^t O(\log \log n_f) \cdot \mathbf{P}(N = t)\\
&= \sum\limits_{t=0}^{\infty} t \cdot O(\log \log n_f) \cdot \mathbf{P}(N = t)\\
&\leq \sum\limits_{t=0}^{\infty} t \cdot O(\log \log n_f) \cdot \left(\frac{1}{2}\right)^t\\
&= O(\log \log n_f) \cdot \sum\limits_{t=0}^{\infty} t \cdot \left(\frac{1}{2}\right)^t\\
&= O(\log \log n_f) \cdot O(1)\\[2mm]
&= O(\log \log n_f)
\end{align*}
\end{proof}

\begin{theorem}
Algorithm \ref{alg:Bipartite2RulingSet} has an expected running time of
$O((\log \log n_f)^2 \cdot \log \log \min \{n_f, n_c\})$ rounds in the
$\mathcal{CONGEST}$ model.
\label{theorem:rulingset_runtime}
\end{theorem}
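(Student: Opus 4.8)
The plan is to bound the expected running time as (expected number of iterations of the \textbf{while} loop) $\times$ (rounds per iteration), and to read both factors off the lemmas already proved. First I would record the easy structural facts: whenever \textsc{DisseminateAdjacencies} succeeds in an iteration, every client learns all of $E(H[M])$ and hence computes the same MIS $L$ of $H[M]$; adding $L$ to $T$ and deleting $M\cup N(M)$ keeps $T$ a $2$-ruling set of the deleted part (each deleted vertex is within distance $2$ in $H$ of $L$), so at termination $T$ is a $2$-ruling set of the original $H$. Also, vertices are only ever removed from $H$, so $|E(H)|$ is non-increasing; consequently the levels $L_0,L_1,\ldots$ and finally $L_\ast$ are attained in order, and the loop halts exactly upon reaching $L_\ast$ (Line 13).

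Next I would count iterations. The number of levels, including $L_\ast$, is $O(\log\log n_f)$: the thresholds $l_j = 8\,n_f^{1+2^{-j}}$ drop from $8 n_f^2$ down to $\Theta(n_f)$ across $O(\log\log n_f)$ values of $j$, after which $L_\ast$ finishes the job. Let $T_{i+1}$ be the number of iterations between first reaching $L_i$ and first reaching $L_{i+1}$, so the total iteration count is $\sum_i T_{i+1}$. Lemma \ref{lemma:expect_time_level} gives $\mathbf{E}(T_{i+1}\mid \text{history up to reaching }L_i) = O(\log\log n_f)$, a bound that holds uniformly over all such histories; taking expectations and summing over the $O(\log\log n_f)$ levels (linearity of expectation, no independence needed) yields an expected iteration count of $O((\log\log n_f)^2)$.

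Then I would bound the cost of one iteration of \textsc{Facility2RulingSet}. Choosing $M$ (Lines 3--4) and, in the success branch, computing $L$ locally, updating $T$, deleting $M\cup N(M)$, and adjusting $i$ and $p$ (Lines 7--13) each take $O(1)$ communication rounds, since clients already know the class partition and can learn $M$ and the relevant adjacencies in $O(1)$ rounds. Line 5 runs \textsc{DisseminateAdjacencies} for $7\log\log\min\{n_f,n_c\}$ of its internal iterations, and each internal iteration (as well as the $O(1)$-round wrap-up in Lines 16--17 of that algorithm) is a fixed number of communication rounds; hence Line 5 costs $O(\log\log\min\{n_f,n_c\})$ rounds. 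So each iteration of \textsc{Facility2RulingSet} costs $O(\log\log\min\{n_f,n_c\})$ rounds, and multiplying gives the claimed $O((\log\log n_f)^2\cdot\log\log\min\{n_f,n_c\})$.

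The step I expect to be the main obstacle is the clean composition of the per-level expectations: the variables $T_{i+1}$ are \emph{not} independent (a long detour at one level changes the history seen at the next), so one must invoke Lemma \ref{lemma:expect_time_level} strictly as a bound on the \emph{conditional} expectation given the entire past, valid uniformly, before appealing to linearity of expectation; and one must separately confirm that the ``last mile'' — reaching $L_\ast$ from the final regular level, where $|E(H)|=\Theta(n_f)$ — is itself absorbed into $O(\log\log n_f)$ expected iterations, e.g.\ via the same random-walk analysis used in Lemmas \ref{lemma:return_below}--\ref{lemma:expect_time_level}, or by noting that once $|E(H)|\le n_f$ the entire remaining $H$ can be disseminated and a full MIS computed, clearing $H$ in one more successful iteration.
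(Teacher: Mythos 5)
Your proposal is correct and follows essentially the same route as the paper's own (much terser) proof: $O(\log\log n_f)$ levels, expected $O(\log\log n_f)$ iterations per level via Lemma \ref{lemma:expect_time_level}, and a deterministic $O(\log\log\min\{n_f,n_c\})$-round cost per iteration dominated by the capped call to \textsc{DisseminateAdjacencies}. Your extra care about summing conditional expectations without independence, and about the final step to $L_\ast$, only makes explicit what the paper leaves implicit.
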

\begin{proof}
Once a certain level $L_i$ has been reached, the expected time for Algorithm
\textsc{Facility2RulingSet} to reach level $L_{i+1}$ is
$O((\log \log n_f) \cdot \log \log \min \{n_f, n_c\})$ (where the factor of
$O(\log \log \min \{n_f, n_c\})$ is the upper bound on the running time of the
dissemination subroutine). Since there are $O(\log \log n_f)$ levels to progress
through before reaching $L_{\ast}$ and terminating, the algorithm has an
expected running time of
$O((\log \log n_f)^2 \cdot \log \log \min \{n_f, n_c\})$ rounds.
\end{proof}

\section{Concluding Remarks}
Our expectation is that the Message Dissemination with Duplicates (MDD) problem
and its solution via probabilistic hashing will have applications in other distributed
algorithms in low-diameter settings.
This problem may also serve as a candidate for lower bounds research.
In particular, the results in this paper raise the question of whether $\Omega(\log\log \min\{n_c, n_f\})$
is a lower bound on the number of rounds it takes to solve MDD.
Alternately, it will be interesting (and surprising) to us if MDD was solved in $O(1)$ rounds.

Our two papers (the current paper and \cite{BHP12arxiv}) on super-fast algorithms yielding
$O(1)$-approximation for metric facility location, lead naturally to similar
questions for the non-metric version of the problem.
In particular, we are interested in super-fast algorithms, hopefully running in $O(\mbox{poly}(\log\log n))$ rounds,
that yield a logarithmic-approximation to the non-metric facility location problem on
cliques and complete bipartite networks.

\bibliographystyle{plain}
\bibliography{DistComp}

\end{document}